\documentclass[journal]{IEEEtran}
\usepackage{cite}
\usepackage{graphicx}
\usepackage[hyphens]{url}

\usepackage{amsmath, amssymb, amsthm}
\usepackage{booktabs}
\usepackage{array}
\usepackage{threeparttable}
\usepackage{makecell}
\usepackage[version=4]{mhchem}
\theoremstyle{plain}\newtheorem{proposition}{Proposition}
\theoremstyle{remark}\newtheorem{remark}{Remark}
\newcommand{\Var}{\operatorname{Var}}
\newcommand{\BEP}{\mathrm{BEP}}
\newcommand{\dd}{\mathrm{d}}
\newcommand{\tb}{\bar{\theta}_\mathrm{B}}

\title{Autonomous Receptor Tuning via Integral Feedback for Adaptive Molecular Communication Receivers}
\author{Ya\u{g}{\i}z~Celhan and Murat~Kuscu,~\IEEEmembership{Senior Member,~IEEE}\thanks{The authors are with the Nano/Bio/Physical Information and Communications Laboratory (CALICO Lab), Department of Electrical and Electronics Engineering, Ko\c{c} University, Istanbul, T\"urkiye (e-mail: ycelhan24@ku.edu.tr; mkuscu@ku.edu.tr).}\thanks{This work was supported by the Scientific and Technological Research Council of Turkey (TUBITAK) under Grant \#123E516 and Grant \#123C592.}}

\begin{document}
\maketitle

\begin{abstract}
Molecular communication (MC) uses molecules as information carriers among engineered cells and artificial devices in the Internet of Bio-Nano Things. MC receivers commonly employ ligand receptors, whose discrimination between the two received concentration levels of binary signaling depends on their position relative to the dissociation constant. Changes in the transmitter-receiver distance, ligand degradation, and interference can shift both levels below or above the dynamic range of fixed receptors, impairing detection. Previous work restored performance by externally tuning the dissociation constant to the geometric mean of the estimated received levels. In this paper, inspired by receptor adaptation in living cells, we investigate a receiver whose intracellular \emph{antithetic integral feedback network} senses the receptor occupancy and autonomously tunes the dissociation constant through a modulator molecule. For constant received levels and equiprobable bits, we prove that, \emph{without knowing these levels}, the network sets the dissociation constant at equilibrium to their geometric mean, which minimizes the bit error probability (BEP) for any fixed receptor cooperativity. We derive the gain, bandwidth, noise, and stability boundary of this feedback loop. The random bits averaged by the network induce \emph{self-noise}, which sets an error floor independent of the network molecule count and, balanced against the channel-tracking error, yields a loop-speed design rule. For received levels varying fivefold over a coherence time of 24 loop time constants, adaptation reduces the BEP 61-fold for non-cooperative and 410-fold for cooperative receptors, compared with the best non-adaptive receiver. This benefit grows with the coherence time and peaks at moderate variation depths.
\end{abstract}

\begin{IEEEkeywords}
Molecular communication, adaptive receiver, ligand-receptor interactions, receptor regulation, integral feedback, robust perfect adaptation, chemical reaction networks, biomolecular control.
\end{IEEEkeywords}

\section{Introduction}\label{sec:intro}
\IEEEPARstart{L}{iving} cells sense their environment through ligand receptors and continuously regulate the responses of these receptors. In bacterial chemotaxis, for instance, the receptor activity is reset as the background concentration of the attractant rises, which keeps the cell sensitive to fractional changes in concentration over five orders of magnitude~\cite{Mello2007effects,Kalinin2009logarithmic}. Similarly, eukaryotic cells shift, sharpen, or desensitize the response curves of their receptors, e.g., through methylation, phosphorylation, and allosteric modulation~\cite{Tu2018adaptation,Christopoulos2002allosterism}. At the heart of these mechanisms lies a common feedback structure, in which the cell measures the output of its own sensory system and moves the response curve toward the current signal level. In bacterial chemotaxis, this regulation implements integral feedback in the control-theoretic sense, which underlies the exactness of the adaptation~\cite{Yi2000robust,Alon1999robustness}.

Molecular communication (MC), in which information is encoded into the concentration of molecules, is envisioned as the physical layer of the Internet of Bio-Nano Things (IoBNT), comprising engineered cells and micro/nanomachines in biological environments~\cite{Akan2017fundamentals,Farsad2016survey,Akyildiz2008nanonetworks}. Similar to their natural counterparts, MC receivers based on engineered cells detect the transmitted symbols through ligand receptors, and the sensitive region of their response curve, i.e., their dynamic range, is centered at the dissociation constant $K_\mathrm{D}$ of these receptors~\cite{Kuscu2019transmitter}. In practice, the received concentration levels decrease with the transmitter-receiver distance and the enzymatic degradation of ligands, and increase with the background interference and the intersymbol interference (ISI) resulting from the long memory of diffusive channels. In our previous work~\cite{Kuscu2023adaptive}, we showed that a receiver with a fixed $K_\mathrm{D}$ suffers a severe loss in detection performance when the received levels leave its dynamic range, and that tuning $K_\mathrm{D}$ to the geometric mean of the two received levels restores the optimal performance. That analysis, however, as well as the adaptive detection methods proposed in the MC literature, treated adaptation as an external operation, assuming an estimate of the received levels and a means of tuning the receptor affinity accordingly.

Such external access does not exist for a receiver of cellular dimensions, as an engineered cell can act only through the chemical reactions it hosts. Therefore, adaptivity must be embedded in the receiver itself, through an intracellular chemical reaction network (CRN) that measures and corrects the detuning of the receptor affinity without channel state information (CSI), i.e., without any knowledge of the received levels. This is precisely the problem addressed in synthetic biology under the name of \emph{robust perfect adaptation}. A prominent solution to this problem is the \emph{antithetic integral feedback} (AIF) motif of Briat, Gupta, and Khammash~\cite{Briat2016antithetic}, in which two controller species annihilate each other in a one-to-one stoichiometry, and the surplus of one species over the other integrates the difference between a reference and a measurement, which returns the regulated quantity exactly to its setpoint under persistent disturbances. The AIF motif has been implemented in bacteria and mammalian cells~\cite{Aoki2019universal,Frei2022mammalian,Anastassov2023intein}, its performance limits and noise properties have been analyzed~\cite{Olsman2019hard,Briat2018pi,Filo2026sensor}, and cybergenetics provides a design methodology for such circuits~\cite{Khammash2022cybergenetics,Filo2023biomolecular,Khammash2021perfect}. In the control literature, however, the AIF motif regulates the abundance of a molecular species toward a biological target~\cite{Briat2016antithetic,Olsman2019hard,Filo2022hierarchy}, whereas in an adaptive MC receiver, the regulated quantity is a detection statistic, whose setpoint is determined by the detection problem.

In engineered MC systems, adaptivity has so far been addressed at the algorithmic level, e.g., through adaptive thresholds that follow the previously decoded bits~\cite{Damrath2016adaptive}, detection based on estimated channel impulse responses~\cite{Chang2018adaptive,Huang2020distance}, and a threshold controlled by a proportional-integral-derivative (PID) controller~\cite{ARTRx2025}, all of which are intended for electronic implementation. Reaction networks have also been proposed for maximum a posteriori detection~\cite{FAU2023mapcrn}, synchronization~\cite{FAU2024sync}, and interference cancellation~\cite{ChemSICal2025}. However, none of these studies applies feedback to the receptor parameters, and our previous work~\cite{Kuscu2023adaptive}, which identified integral-feedback CRNs as an enabling technology for tuning the receptor affinity, did not construct the feedback loop.

\begin{figure*}[t]
\centering
\includegraphics[width=\textwidth]{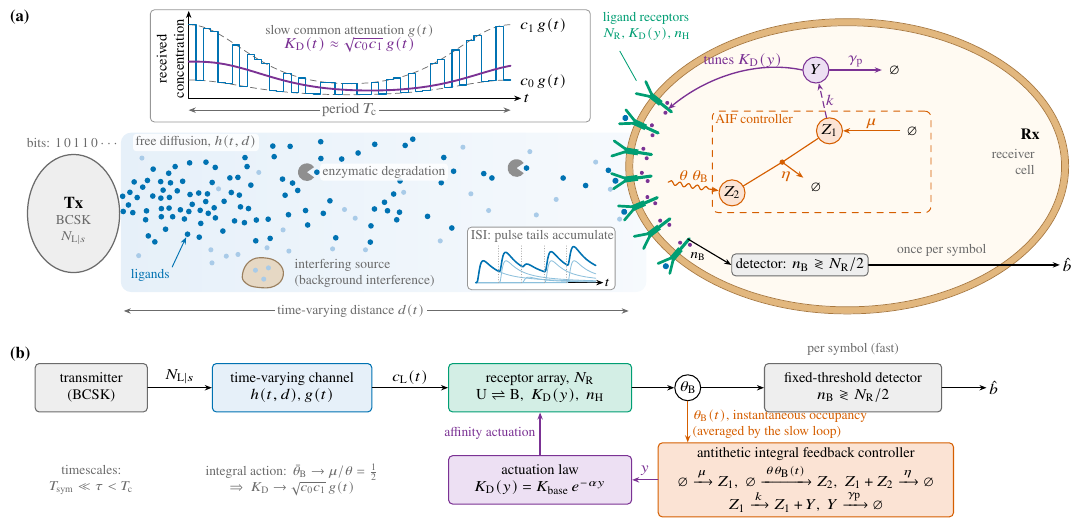}
\caption{(a)~Considered scenario. The transmitter (Tx) encodes the bits into the number of released ligands, which reach the receiver cell (Rx) by free diffusion. Changes in the transmitter-receiver distance and enzymatic degradation scale both received levels by a slow common attenuation $g(t)$ (upper inset, illustrative, with $c_0=1.0$, $c_1=4.0$, and a fivefold attenuation), an interfering source adds background molecules, and the tails of previous pulses accumulate as ISI (lower inset). Inside the Rx, the AIF controller compares the receptor occupancy, sensed through the production of $Z_2$, with the reference production of $Z_1$, and the modulator $Y$ tunes $K_\mathrm{D}(y)$ toward $\sqrt{c_0c_1}\,g(t)$. The detector counts the bound receptors once per symbol. (b)~Block diagram of the adaptive MC receiver, whose controller senses the instantaneous receptor occupancy and averages it over the symbols through its slow response.}
\label{fig:schematic}
\end{figure*}
In this paper, we construct this feedback loop and analyze its quantitative consequences. We investigate an MC receiver in which an AIF controller senses the receptor occupancy through the signaling pathway of the receptor, and tunes the $K_\mathrm{D}$ of the receptors through an intracellular modulator. The key observation is that, since the controller is slow compared to the symbol duration, it regulates the \emph{average} occupancy of the receptors over the incoming bit stream, and holds it at a setpoint of one half. The average of the occupancies for the two bits equals one half only when these occupancies are placed symmetrically about it, which, for a receptor with a Hill-type response, occurs only when $K_\mathrm{D}$ equals the geometric mean of the two received levels. Therefore, the setpoint of the biochemical controller coincides with the optimum of the detection problem for any fixed receptor cooperativity, i.e., the design rule imposed externally in~\cite{Kuscu2023adaptive} \emph{emerges as the equilibrium of a four-reaction network}. Nevertheless, the same slow response makes the controller sensitive to the random fraction of bit-1 transmissions within its averaging window. We show that this data-induced \emph{self-noise} is the dominant noise source of the receiver at physiological copy numbers of the network molecules, and that its balance against the error in tracking the channel variations determines how fast the feedback loop should be.

The main contributions of this paper are summarized as follows, where the first three items constitute the central results, and the remaining items support them:
\begin{itemize}
\item We propose an MC receiver in which an intracellular AIF controller tunes the dissociation constant of the ligand receptors, and prove that, for a piecewise-constant received signal, equal priors, a fixed Hill coefficient, and a fixed-threshold detector with random tie-breaking, its equilibrium places the dissociation constant, without any CSI, at the geometric mean of the received levels, which is the exact global minimizer of the bit error probability (BEP) of this detector and makes its decision rule the maximum-likelihood rule (Sections~\ref{sec:model} and~\ref{sec:lock}).
\item We derive the gain, bandwidth, noise, exact stability boundary, and leakage offset of the feedback loop, and verify them by stochastic simulation. We show that the self-noise imposes an error floor that does not decrease with the copy numbers of the network molecules, and identify the number of symbols within the averaging window as the key design parameter (Sections~\ref{sec:gain}--\ref{sec:leaky} and~\ref{sec:res-static}).
\item We derive a leading-order design rule for the loop speed by balancing the self-noise against the error in tracking the channel variations, and compare it with simulations over a range of loop speeds, coherence times, and variation depths of the channel (Sections~\ref{sec:bep} and~\ref{sec:res-headline}).
\item We compare the adaptive receiver with the best non-adaptive receiver, optimized with knowledge of the channel statistics, and determine the operating region, in terms of the variation depth and coherence time of the channel, in which adaptation is beneficial (Section~\ref{sec:res-headline}).
\item We show that receptor cooperativity widens the separation between the occupancies for the two bits by more than its noise cost, and thus improves the BEP at every tested copy number of the network molecules. Cooperativity co-regulated by the same modulator, on the other hand, constrains the actuator gain and, through its dependence on the attenuation, slows down the loop on the time-varying channel, which forfeits the benefit of cooperativity (Sections~\ref{sec:coop}, \ref{sec:res-headline}, and~\ref{sec:res-static}).
\item We identify chemical realizations of the sensor, controller, and actuator, and show that, under a CIR-based model of the received signal, the setpoint of the controller is invariant under attenuation, and the benefit of adaptation grows as the ISI weakens. An illustrative mapping to physical units calls for post-translational actuation and slowly diluted controller species (Sections~\ref{sec:res-tracking}, \ref{sec:res-headline}, and~\ref{sec:bio}).
\end{itemize}

The remainder of this paper is organized as follows. In Sections~\ref{sec:model} and~\ref{sec:analysis}, we present the system model and analyze the closed loop, respectively. The numerical results are presented in Section~\ref{sec:results}, followed by the discussion of the biological feasibility and the limitations in Section~\ref{sec:bio}, and the conclusion in Section~\ref{sec:conclusion}.

\section{System Model}\label{sec:model}
The considered scenario and the adaptive MC receiver are illustrated in Fig.~\ref{fig:schematic}(a), and the corresponding block diagram is shown in Fig.~\ref{fig:schematic}(b). Ligand molecules released by the transmitter bind to a receptor array on the surface of the receiving cell, and the dissociation constant $K_\mathrm{D}$ of these receptors is set by an intracellular modulator species $Y$. The receptor occupancy $\theta_\mathrm{B}$ is read once per symbol by a threshold detector, which produces the decoded bit $\hat b$, and is continuously sensed by an intracellular controller, which responds to the \emph{symbol-averaged occupancy} $\tb$ and adjusts the modulator concentration. Capital letters denote molecular species, and lower-case letters denote their concentrations, which the system size $\Omega$ converts into molecule numbers, e.g., $Y=\Omega y$. Rates and times are nondimensional, with rates of order unity, and one time unit is mapped to 25~s in Section~\ref{sec:bio}. Analytical expressions are written for $\Omega=1$.

\subsection{MC Channel and Received Signal}
We consider binary concentration shift keying (BCSK) over a diffusion-based MC channel, in which the transmitter releases $N_{\mathrm{L}|s}$ ligand molecules to transmit bit $s\in\{0,1\}$. The channel impulse response (CIR) of free diffusion is $h(t,d)=(4\pi Dt)^{-3/2}\exp(-d^2/4Dt)$, where $d$ is the transmitter-receiver distance and $D$ is the diffusion coefficient of the ligands. For sampling at the peak time $t_\mathrm{peak}=d^2/6D$ of the CIR, the received concentration for bit $s$ is $c_s=N_{\mathrm{L}|s}\,h(t_\mathrm{peak},d)$, up to an interference term discussed in Section~\ref{sec:res-tracking}~\cite{Kuscu2023adaptive}. The receiver must then distinguish between the two \emph{received levels} $0<c_0<c_1$, whose ratio $r=c_1/c_0$ is the \emph{level ratio}. On--off signaling, i.e., $c_0=0$, admits no finite affinity that places the average occupancy at one half, and is therefore outside the scope of the present design. The transmitted bits are independent and equiprobable, and each symbol lasts $T_\mathrm{sym}$, during which the received concentration is modeled as constant at $c_s$. We refer to this as the \emph{piecewise-constant signal model}. The shape of the diffusive pulse within a symbol, which is sensed by the controller but not by the detector, is treated with a CIR-based model in Section~\ref{sec:res-tracking}.

In practice, the received levels vary in time, since the transmitter-receiver distance may change, enzymes in the channel may degrade the ligands, and the output of the transmitter may fluctuate. Each of these effects scales both received levels by a common attenuation factor $g(t)$, i.e., $c_s(t)=c_s\,g(t)$, which is slow compared to the symbol duration, and is the time variation that the receiver must track. The change of the pulse shape with the transmitter-receiver distance is not considered here. The channel remains approximately constant over a coherence time $T_\mathrm{c}\gg T_\mathrm{sym}$~\cite{Kuscu2023adaptive,Ahmadzadeh2018mobile}, which we identify in Section~\ref{sec:results} with the interval between the steps of a step-wise attenuation or with the period of a periodic attenuation.

\subsection{Receiver and Detection}
The receiver senses the received concentration using $N_\mathrm{R}$ receptors, which reversibly bind the ligands through the reaction $\ce{U <=>[$c_\mathrm{L} k_+$][$k_-$] B}$, where U and B denote the unbound and bound states, $c_\mathrm{L}$ is the ligand concentration, and $k_+$ and $k_-$ are the binding and unbinding rates, respectively. The dissociation constant $K_\mathrm{D}=k_-/k_+$ gives a measure of the ligand-receptor affinity. Assuming that the binding reaction equilibrates within a symbol, the fraction of bound receptors follows the ligand concentration instantaneously through the Hill function
\begin{equation}
\theta_\mathrm{B}(c;K_\mathrm{D},n_\mathrm{H})=\frac{c^{n_\mathrm{H}}}{c^{n_\mathrm{H}}+K_\mathrm{D}^{n_\mathrm{H}}},
\label{eq:hill}
\end{equation}
where $n_\mathrm{H}$ is the Hill coefficient, which equals one for independent binding sites and exceeds one for cooperative receptors, in which case $N_\mathrm{R}$ counts independently switching units, such as receptor dimers or clusters. As the response curve flattens outside the dynamic range around $K_\mathrm{D}$, two received levels on the same side of this range produce nearly equal occupancies, and thus, the position of $K_\mathrm{D}$ relative to the received levels determines the error performance.

Detection is performed by a \emph{fixed-threshold detector}, whose threshold is set at half occupancy. Once per symbol, the detector counts the bound receptors, $n_\mathrm{B}\sim\mathrm{Binomial}(N_\mathrm{R},\theta_\mathrm{B}(c_s))$, and decides $\hat b=1$ if $n_\mathrm{B}>N_\mathrm{R}/2$ and $\hat b=0$ if $n_\mathrm{B}<N_\mathrm{R}/2$, while a tie, i.e., $n_\mathrm{B}=N_\mathrm{R}/2$, is broken at random, which, for even $N_\mathrm{R}$, is equivalent to the strict majority rule of $N_\mathrm{R}-1$ receptors. With equal priors, the BEP is given by
\begin{equation}
\begin{aligned}
\BEP&=\tfrac12\,P_{\mathrm{e}|0}+\tfrac12\,P_{\mathrm{e}|1},\qquad m=N_\mathrm{R}/2,\\
P_{\mathrm{e}|0}&=\Pr[n_\mathrm{B}>m\,|\,\theta_0]+\tfrac12\Pr[n_\mathrm{B}=m\,|\,\theta_0],\\
P_{\mathrm{e}|1}&=\Pr[n_\mathrm{B}<m\,|\,\theta_1]+\tfrac12\Pr[n_\mathrm{B}=m\,|\,\theta_1],
\end{aligned}
\label{eq:bep}
\end{equation}
where $\theta_s=\theta_\mathrm{B}(c_s)$ denotes the occupancy for bit $s$. For $n_\mathrm{H}=1$, we showed in~\cite{Kuscu2023adaptive} that the error probability of a threshold detector with an optimized threshold is minimized at $K_\mathrm{D}=\sqrt{c_0c_1}$, where the two occupancies are placed symmetrically about one half, and in Section~\ref{sec:lock}, we establish the exact optimality of this affinity for the fixed-threshold detector in~(\ref{eq:bep}) and an arbitrary $n_\mathrm{H}$. The threshold is kept fixed by design, since an adaptive threshold would make it impossible to separate the contribution of the receptors from that of the detector.

\subsection{Feedback Controller}
To render the receptor affinity adaptive, we let $K_\mathrm{D}$ depend on the concentration $y$ of an intracellular modulator species $Y$, as described in Section~\ref{sec:actuator}, which is, in turn, regulated by the AIF motif~\cite{Briat2016antithetic}. A controller species $Z_1$, representing the reference, is produced at a constant rate $\mu$, and a second controller species $Z_2$, representing the measurement, is produced at a rate proportional to the receptor occupancy. The two species annihilate each other in a one-to-one stoichiometry. Hence, the surplus of $Z_1$ over $Z_2$ is the integral of the error between the reference and the measurement, and as $Z_1$ catalyzes the production of the modulator, $Y$ is driven upward or downward until this error vanishes. The corresponding reactions are
\begin{equation}
\begin{alignedat}{2}
\varnothing&\xrightarrow{\;\mu\;}Z_1,\qquad\qquad&
\varnothing&\xrightarrow{\;\theta\,\theta_\mathrm{B}\;}Z_2,\\
Z_1+Z_2&\xrightarrow{\;\eta\;}\varnothing,&
Z_1&\xrightarrow{\;k\;}Z_1+Y,\\
Y&\xrightarrow{\;\gamma_\mathrm{p}\;}\varnothing,&
Z_1,\,Z_2&\xrightarrow{\;\gamma_\mathrm{c}\;}\varnothing,
\end{alignedat}
\label{eq:aif}
\end{equation}
where $\theta$ is the sensing rate, $\eta$ the annihilation (sequestration) rate, $k$ the actuation rate, $\gamma_\mathrm{p}$ the turnover rate of the modulator, and $\gamma_\mathrm{c}$ the dilution or degradation rate of the controller species. The controller comprises the first four reactions in~(\ref{eq:aif}), while the fifth reaction, the turnover of the modulator, completes the ideal closed-loop model, in which the controller-leak reactions are absent. The integrator is ideal for $\gamma_\mathrm{c}=0$ and leaky for $\gamma_\mathrm{c}>0$, which is analyzed in Section~\ref{sec:leaky}. The mean-field dynamics of the closed loop are then given by
\begin{equation}
\begin{aligned}
\dot z_1&=\mu-\eta z_1z_2-\gamma_\mathrm{c}z_1,\\
\dot z_2&=\theta\,\theta_\mathrm{B}\!\big(c_\mathrm{L}(t);K_\mathrm{D}(y),n_\mathrm{H}\big)-\eta z_1z_2-\gamma_\mathrm{c}z_2,\\
\dot y&=kz_1-\gamma_\mathrm{p}y,
\end{aligned}
\label{eq:ode}
\end{equation}
where $c_\mathrm{L}(t)$ denotes the received ligand concentration. Since the loop is slow compared to the symbol duration, as required in Section~\ref{sec:timescales}, its slow response is governed by the average of the sensed occupancy over the symbols, which, for equal priors, is given by
\begin{equation}
\tb(y)=\tfrac12\,\theta_\mathrm{B}(c_0;K_\mathrm{D}(y),n_\mathrm{H})+\tfrac12\,\theta_\mathrm{B}(c_1;K_\mathrm{D}(y),n_\mathrm{H}).
\label{eq:g}
\end{equation}
The controller thus regulates the mean of the occupancies for the two bits, whereas the detector observes each of them separately. The symbol-by-symbol fluctuation of the occupancy around this average, which is also sensed by the controller, constitutes a data-induced noise source, referred to as \emph{self-noise}, which is analyzed in Section~\ref{sec:gain}.

\subsection{Modulation of the Dissociation Constant}\label{sec:actuator}
The affinity of a receptor can be changed by an intracellular modulator that shifts the free energy of ligand binding, as in the two-state model of allostery~\cite{Monod1965allosteric,Christopoulos2002allosterism,Canals2012mwc,Keymer2006chemosensing}. We consider two such actuation laws. In the first one, each covalent modification of the receptor, such as the methylation of a glutamate residue in a chemoreceptor, shifts the binding free energy by a fixed amount $a$ in units of the thermal energy, such that the dissociation constant depends exponentially on the modification level $m$ of the receptor, $K_\mathrm{D}=K_0e^{-am}$.

Here, $K_\mathrm{D}$ is a phenomenological ligand-affinity parameter. In bacterial chemoreceptors, methylation shifts the free energy of the kinase-activity response, whose half-point moves exponentially with the number of methylated residues~\cite{Tu2008timevarying,Mello2007effects}, while the ligand-binding curve itself shifts much less~\cite{Hathcock2023nonequilibrium}. Hence, this regulation motivates the free-energy description but does not establish the ligand-affinity actuator assumed here, whose finite-range and kinetic approximations are tested in Section~\ref{sec:res-headline}.

The modulator $Y$ in~(\ref{eq:aif}) acts as the modifier, an enzyme or effector whose activity sets the modification level through $\dot m=k_\mathrm{m}y(M-m)-k_\mathrm{d}m$, with $M$ modification sites, modification rate $k_\mathrm{m}$, and removal rate $k_\mathrm{d}$. When the modification is far from saturation, $m\ll M$, and its removal is fast compared to the loop, the modification level follows the modifier as $m\approx(k_\mathrm{m}M/k_\mathrm{d})\,y$, and the law becomes
\begin{equation}
K_\mathrm{D}(y)=K_\mathrm{base}\,e^{-\alpha y},\qquad \alpha=\frac{ak_\mathrm{m}M}{k_\mathrm{d}},
\label{eq:exp}
\end{equation}
where $K_\mathrm{base}=K_\mathrm{D}(0)$, and $\alpha$ is the \emph{actuator gain}, i.e., the sensitivity of $\ln K_\mathrm{D}$ to the modulator concentration. Since the exact quasi-steady relation is $m_\infty=Mk_\mathrm{m}y/(k_\mathrm{d}+k_\mathrm{m}y)$, i.e., $K_\mathrm{D}=K_\mathrm{base}\exp[-\alpha y/(1+\alpha y/aM)]$, this \emph{exponential actuation law} requires both fast modification kinetics and a modification range $aM$ that is large compared to the required shift $\ln(K_\mathrm{base}/K_\mathrm{D})$, which, for the calibration in Table~\ref{tab:params}, equals $1+\ln S$ at the deepest point of a channel variation with depth $S$, where the attenuation factor reaches $g=1/S$. Both conditions are tested in Section~\ref{sec:res-headline}, and unless otherwise stated, the exponential law is used in the analysis and the simulations. By convention, $y$ denotes the activity that raises the affinity, which, for the Tar receptor, is the demethylating activity, as methylation lowers its affinity.

In the second law, an effector $Y$ binds an allosteric site of the receptor with a dissociation constant $K_\mathrm{Y}$ for the ligand-free receptor and $K_\mathrm{Y}/\chi$ for the ligand-bound receptor, where $\chi>1$ is the cooperativity factor of the allosteric ternary-complex model~\cite{Christopoulos2002allosterism}. Summing the binding polynomial over the four states of the receptor yields
\begin{equation}
\begin{aligned}
K_\mathrm{D}(y)&=K_\mathrm{base}\,\frac{1+y/K_\mathrm{Y}}{1+\chi\,y/K_\mathrm{Y}},\\
\beta(y)&=-\frac{\dd\ln K_\mathrm{D}}{\dd y}=\frac{(\chi-1)K_\mathrm{Y}}{(K_\mathrm{Y}+y)(K_\mathrm{Y}+\chi y)},
\end{aligned}
\label{eq:mwc}
\end{equation}
which is bounded between $K_\mathrm{base}/\chi$ and $K_\mathrm{base}$, and is therefore referred to as the \emph{bounded actuation law}, with the local actuator gain $\beta(y)$. The local analysis in Section~\ref{sec:analysis} applies to both laws, with $\alpha$ replaced by $\beta(y^*)$.

The two laws differ in their range and in the constancy of their actuator gain. The target affinity must lie within the range of the actuator, which, for~(\ref{eq:mwc}), requires $K_\mathrm{base}/\chi<\sqrt{c_0c_1}\,g<K_\mathrm{base}$ for every attenuation $g$ of the channel. Moreover, $\beta(y)$ decreases monotonically from $(\chi-1)/K_\mathrm{Y}$ at $y=0$, and thus, the loop slows down as the modulator concentration moves away from its calibration value, while the actuator gain of~(\ref{eq:exp}) is constant. The consequences of these differences are quantified in Section~\ref{sec:res-headline}.

We also consider two variants of the receptor array. In a \emph{mixed array}, $N_\mathrm{t}=wN_\mathrm{R}$ receptors are tunable, and the remaining $N_\mathrm{f}=(1-w)N_\mathrm{R}$ receptors are fixed at $K_\mathrm{fix}$, as in the receptor-expression adaptive receiver (REAR) of~\cite{Kuscu2023adaptive}. The controller then senses the bound fraction of the whole array, the count $n_\mathrm{B}$ is the sum of two binomial variables, which is treated exactly, and near $g=1$, where $K_\mathrm{fix}=K_\mathrm{D}(y^*)$, the loop gain is $w$ times that of a fully tunable array. In the second variant, the cooperativity of the receptors is \emph{co-regulated} by the modulator, $n_\mathrm{H}(y)=n_\mathrm{base}+\kappa y$ with $n_\mathrm{base}=1.5$ and $\kappa=1.0$, which models the methylation-dependent coupling of chemoreceptors~\cite{Skoge2011dynamics,Sourjik2004functional}. The model parameters, including the received levels $c_0=1.0$ and $c_1=4.0$, are listed in Table~\ref{tab:params}, and their physical counterparts are discussed in Section~\ref{sec:bio}.
\begin{table}[t]
\caption{Model Parameters (Nondimensional Calibration) and Their Physical Counterparts}
\label{tab:params}
\centering\scriptsize
\setlength{\tabcolsep}{3pt}
\begin{tabular}{@{}l>{\raggedright\arraybackslash}p{2.3cm}>{\raggedright\arraybackslash}p{0.8cm}>{\raggedright\arraybackslash}p{3.5cm}@{}}
\toprule
symbol & meaning & value & physical counterpart \\
\midrule
$\mu,\ \theta$ & reference, sensing & 0.5, 1.0 & post-translational, $10^{-1}$ to $1$~s$^{-1}$ \\
$k,\ \gamma_\mathrm{p}$ & actuation, $Y$ turnover & 0.5, 0.5 & post-translational, $\gamma_\mathrm{p}=0.02$~s$^{-1}$ in Section~\ref{sec:bio} \\
$\eta$ & sequestration & 1.0 & strong binding \\
$\gamma_\mathrm{c}$ & controller dilution & 0 & dilution, $10^{-4}$ to $10^{-3}$~s$^{-1}$ \\
$K_\mathrm{base}$ & $K_\mathrm{D}$ at $y=0$ & $2e$ & sets $y^*=1/\alpha$ \\
$c_0,\ c_1$ & received levels, bits 0/1 & 1.0, 4.0 & level ratio 50 in \cite{Kuscu2023adaptive} \\
$T_\mathrm{sym}$ & symbol duration & 0.5 & $3\,t_\mathrm{peak}$, 12.5~s \\
$N_\mathrm{R}$ & number of receptors & 200 & $10^3$ to $10^5$ per cell \\
$\Omega$ & system size & 200 & 12 to 800 in Fig.~\ref{fig:results}(b) \\
$n_\mathrm{base},\ \kappa$ & co-regulated $n_\mathrm{H}(y)$ & 1.5, 1.0 & methylation-dependent coupling \\
$w,\ K_\mathrm{fix}$ & tunable fraction, fixed $K_\mathrm{D}$ & 0.6, 2.0 & fixed receptor population \\
$T_\mathrm{c}$ & coherence time (design) & 80 & $10^{2}$ to $10^{4}$~s \\
time unit & model time & 1 & 25~s (Section~\ref{sec:bio}) \\
\bottomrule
\end{tabular}
\end{table}

\subsection{Timescale Requirements}\label{sec:timescales}
The receiver operates correctly only if the symbol duration $T_\mathrm{sym}$, the loop time constant $\tau$, and the coherence time $T_\mathrm{c}$ of the channel are ordered as
\begin{equation}
T_\mathrm{sym}\ll\tau<T_\mathrm{c}.
\label{eq:window}
\end{equation}
The first inequality ensures that the controller averages over the symbols. A loop faster than a symbol would retune the receptor affinity within each symbol, drive the occupancy to the setpoint for both bits, and thus remove the occupancy difference on which the detection relies. A loop much slower than a symbol, on the other hand, holds the average occupancy at the setpoint, while each symbol moves the occupancy above or below the threshold. The controller then averages approximately $\tau/T_\mathrm{sym}$ symbols, and the fluctuation in the fraction of bit-1 symbols within this \emph{averaging window} constitutes the self-noise. The second inequality ensures that the loop settles before the channel changes. In Section~\ref{sec:bep}, we turn these inequalities into a quantitative rule for $\tau$.

\section{Analysis of the Closed Loop}\label{sec:analysis}
\subsection{Steady State}\label{sec:lock}
At a steady state of~(\ref{eq:ode}) with $\gamma_\mathrm{c}=0$, subtracting the two controller balances yields $\theta\,\tb(y^*)=\mu$. Therefore, the sensed occupancy is maintained at $\mu/\theta$ irrespective of the received levels, which is the robust perfect adaptation property of the motif~\cite{Briat2016antithetic}. It is this property that eliminates the need for CSI at the receiver, since the setpoint is a ratio of two production rates fixed at design time. We choose $\mu/\theta=1/2$, and the following proposition establishes that this choice places the receptor at the optimal operating point.

\begin{proposition}\label{prop:lock}
Consider the Hill receptor~(\ref{eq:hill}) with a fixed $n_\mathrm{H}>0$, a piecewise-constant received signal with levels $0<c_0<c_1$ and equal priors, an array of $N_\mathrm{R}$ independent units, and the fixed-threshold detector with random tie-breaking of~(\ref{eq:bep}). The setpoint condition $\tb(y^*)=\tfrac12$ on the symbol-averaged occupancy~(\ref{eq:g}) holds if and only if $K_\mathrm{D}(y^*)=\sqrt{c_0c_1}$. At this affinity, the two occupancies are
\begin{equation}
\theta_s=\frac{c_s^{n_\mathrm{H}/2}}{c_0^{n_\mathrm{H}/2}+c_1^{n_\mathrm{H}/2}},\qquad \theta_0+\theta_1=1,
\label{eq:centering}
\end{equation}
the fixed-threshold decision rule is the maximum-likelihood rule, and $K_\mathrm{D}=\sqrt{c_0c_1}$ is the unique global minimizer of the error probability~(\ref{eq:bep}) over $K_\mathrm{D}$.
\end{proposition}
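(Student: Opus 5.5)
The plan is to reduce everything to elementary properties of the Hill map and of the binomial tail. Throughout, write $K=K_\mathrm{D}$, $n=n_\mathrm{H}$, and $x_s=(c_s/K)^{n}$, so that $\theta_s=x_s/(1+x_s)$ by~(\ref{eq:hill}). We assume $N_\mathrm{R}\ge 2$.

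\emph{Setpoint.} Clearing denominators in $\tb=\tfrac12$ gives
$$x_0(1+x_1)+x_1(1+x_0)=(1+x_0)(1+x_1),$$
which simplifies to $x_0x_1=1$. This is $(c_0c_1)^{n}=K^{2n}$, i.e., $K=\sqrt{c_0c_1}$ because $n>0$. Uniqueness is also immediate from the strict decrease of $\tb$ in $K$. Substituting $K=\sqrt{c_0c_1}$ into~(\ref{eq:hill}) and multiplying numerator and denominator by $(c_0c_1)^{-n/2}c_s^{n/2}$ yields~(\ref{eq:centering}). The identity $\theta_0+\theta_1=1$ then restates $\tb=\tfrac12$.

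\emph{ML property.} With $\theta_0=1-\theta_1$, the binomial likelihood ratio of a count $n_\mathrm{B}$ is
$$\Big(\tfrac{\theta_1}{\theta_0}\Big)^{n_\mathrm{B}}\Big(\tfrac{1-\theta_1}{1-\theta_0}\Big)^{N_\mathrm{R}-n_\mathrm{B}}=\Big(\tfrac{\theta_1}{\theta_0}\Big)^{2n_\mathrm{B}-N_\mathrm{R}}.$$
Since $\theta_1>\theta_0$, this ratio exceeds one exactly when $n_\mathrm{B}>N_\mathrm{R}/2$ and equals one at a tie. A tie is therefore an ML indifference point, which random tie-breaking resolves without loss. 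Hence the fixed-threshold rule is the ML rule.

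\emph{Global minimality.} Define $F(p)=\Pr[n_\mathrm{B}>m]+\tfrac12\Pr[n_\mathrm{B}=m]$ for $n_\mathrm{B}\sim\mathrm{Binomial}(N_\mathrm{R},p)$. Then~(\ref{eq:bep}) reads
$$\BEP=\tfrac12\big[1-\big(F(\theta_1)-F(\theta_0)\big)\big],$$
so it suffices to show that $D(K)=F(\theta_1(K))-F(\theta_0(K))$ has a unique maximum at $\sqrt{c_0c_1}$. The step I expect to need the most care is differentiating $F$ uniformly in the parity of $N_\mathrm{R}$, including the tie term.
\begin{itemize}
\item For even $N_\mathrm{R}$, the paper's remark that random tie-breaking equals a strict majority of $N_\mathrm{R}-1$ receptors gives $F(p)=\Pr[\mathrm{Bin}(N_\mathrm{R}-1,p)\ge N_\mathrm{R}/2]$.
\item For odd $N_\mathrm{R}$, there is no tie and $F(p)=\Pr[\mathrm{Bin}(N_\mathrm{R},p)\ge (N_\mathrm{R}+1)/2]$.
\end{itemize}
In both cases, the standard derivative of an upper binomial tail gives $F'(p)=C\,[p(1-p)]^{j}$ for some constant $C>0$ and integer $j\ge 0$.

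Next, $\dd\theta_s/\dd\ln K=-n\,\theta_s(1-\theta_s)$. Therefore
$$\frac{\dd D}{\dd\ln K}=-nC\big(\psi(\theta_1)-\psi(\theta_0)\big),\qquad \psi(q)=q^{j+1},\quad q=p(1-p),$$
where $\psi$ is strictly increasing in $q$. The sign of this derivative is the sign of $\theta_0(1-\theta_0)-\theta_1(1-\theta_1)$. Expanding,
$$\theta_1(1-\theta_1)-\theta_0(1-\theta_0)=(\theta_1-\theta_0)(1-\theta_0-\theta_1).$$
Since $\theta_1>\theta_0$, we get $\operatorname{sign}(\dd D/\dd\ln K)=\operatorname{sign}(\theta_0+\theta_1-1)$. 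Because $\theta_0+\theta_1$ is strictly decreasing in $K$ and crosses $1$ only at $\sqrt{c_0c_1}$ (first step), $D$ strictly increases for $K<\sqrt{c_0c_1}$ and strictly decreases afterwards. Hence $\sqrt{c_0c_1}$ is the unique global minimizer of the BEP.
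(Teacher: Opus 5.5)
Your proof is correct and is essentially the paper's argument. The setpoint and maximum-likelihood steps coincide. For optimality, both proofs differentiate the error probability with respect to $\ln K_\mathrm{D}$ using the incomplete-beta derivative $[p(1-p)]^{m'-1}$ of the tie-broken majority tail. The paper reads off the sign in the logistic coordinates $u_s=n_\mathrm{H}\ln(c_s/K_\mathrm{D})$ through $|h-v|<|h+v|$, while you obtain the same sign from the factorization $(\theta_1-\theta_0)(1-\theta_0-\theta_1)$.

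A cosmetic slip: the displayed form of $\theta_s$ follows from dividing the numerator and denominator of $c_s^{n_\mathrm{H}}/\big(c_s^{n_\mathrm{H}}+(c_0c_1)^{n_\mathrm{H}/2}\big)$ by $c_s^{n_\mathrm{H}/2}$, not from multiplying by the factor you state. Your factor leaves the value unchanged but does not produce that form.
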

\begin{proof}
Let $K=K_\mathrm{D}(y^*)$ and $n=n_\mathrm{H}$. The condition $\theta_\mathrm{B}(c_0)+\theta_\mathrm{B}(c_1)=1$ is equivalent to $c_0^n/(c_0^n+K^n)=K^n/(c_1^n+K^n)$, i.e., to $c_0^nc_1^n=K^{2n}$, which gives $K=\sqrt{c_0c_1}$. Substituting $K^n=(c_0c_1)^{n/2}$ into~(\ref{eq:hill}) and dividing by $c_s^{n/2}$ yields~(\ref{eq:centering}). Since $\theta_1=1-\theta_0$, the two binomial likelihoods are mirror images of each other, and their ratio equals one at $n_\mathrm{B}=N_\mathrm{R}/2$, which makes the fixed-threshold decision rule the maximum-likelihood rule. For the optimality, let $m'=\lceil N_\mathrm{R}/2\rceil$ and $H(p)=\Pr[n_\mathrm{B}>N_\mathrm{R}/2]+\tfrac12\Pr[n_\mathrm{B}=N_\mathrm{R}/2]$ for $n_\mathrm{B}\sim\mathrm{Binomial}(N_\mathrm{R},p)$. For odd $N_\mathrm{R}$, this is the majority response of $N_\mathrm{R}$ units, and for even $N_\mathrm{R}$, the random tie-breaking makes it the majority response of $N_\mathrm{R}-1$ units, such that in both cases $H(p)=I_p(m',m')$, the regularized incomplete beta function, with $H'(p)=[p(1-p)]^{m'-1}/B(m',m')$. Write $p_s=\sigma(u_s)$ with the logistic function $\sigma$, $u_0=-h-v$, $u_1=h-v$, $h=\tfrac{n}{2}\ln(c_1/c_0)$, and $v=n\ln\big(K_\mathrm{D}/\sqrt{c_0c_1}\big)$, and let $F(u)=H(\sigma(u))$. Then, $F(-u)=1-F(u)$, and the error probability~(\ref{eq:bep}) is $P_\mathrm{e}(v)=\tfrac12[F(-h-v)+1-F(h-v)]=\tfrac12[F(-h-v)+F(-h+v)]$. Its derivative is $\dd P_\mathrm{e}/\dd v=\tfrac12[F'(h-v)-F'(h+v)]$, where $F'(u)=[\sigma(u)(1-\sigma(u))]^{m'}/B(m',m')$ is even and strictly decreasing in $|u|$. For $v>0$, $|h-v|<|h+v|$, and the derivative is positive, whereas for $v<0$, it is negative. Hence, $v=0$, i.e., $K_\mathrm{D}=\sqrt{c_0c_1}$, is the unique global minimum.
\end{proof}

Therefore, holding the average occupancy at one half places the two occupancies at equal distances above and below the threshold, i.e., at a \emph{symmetric operating point}, which minimizes the error probability of the fixed-threshold detector for any $n_\mathrm{H}$. With the actuation law~(\ref{eq:exp}), the corresponding modulator concentration is $y^*=\ln\big(K_\mathrm{base}/\sqrt{c_0c_1}\big)/\alpha$, which equals $1/\alpha$ for the calibration in Table~\ref{tab:params}. For $n_\mathrm{H}=1$, this result recovers the optimum reported in~\cite{Kuscu2023adaptive}, whereas for cooperative receptors, the optimum was not previously available in closed form. Two qualifications should be noted. First, the controller holds only the average occupancy, while within a symbol, the detector observes $\theta_0$ or $\theta_1$, i.e., the full occupancy separation, as verified in Section~\ref{sec:res-tracking}. Second, in the stochastic loop, provided that the closed loop is ergodic with finite moments~\cite{Briat2016antithetic}, the setpoint holds for the expectation of the sensed occupancy, and not for the affinity itself. For small Gaussian fluctuations, the mean affinity then differs from $\sqrt{c_0c_1}$ by the relative Jensen term $\alpha^2\Var(\delta y)/2$, which amounts to $0.9\,\%$ at $\Omega=200$, and is consistent with the measured mean affinity of $2.016$ reported in Section~\ref{sec:res-hc}.

\subsection{Receptor Gain, Bandwidth, and Modulator Noise}\label{sec:gain}
The dynamics of the loop around the steady state are governed by the \emph{receptor gain} $\zeta^*=\dd\tb/\dd y$ evaluated at $y^*$, i.e., the slope of the sensed occupancy with respect to the modulator concentration. From~(\ref{eq:hill}) and the actuation law~(\ref{eq:exp}), the slope of the occupancy at a single received level is $\partial\theta_\mathrm{B}(c)/\partial y=n_\mathrm{H}\alpha\,\theta_\mathrm{B}(1-\theta_\mathrm{B})$. At the steady state~(\ref{eq:centering}), the factor $\theta_\mathrm{B}(1-\theta_\mathrm{B})$ takes the same value $\theta_0\theta_1$ at both received levels, and therefore,
\begin{equation}
\zeta^*=n_\mathrm{H}\,\alpha\,\theta_0\theta_1,\qquad \theta_0=\frac{1}{1+r^{n_\mathrm{H}/2}}.
\label{eq:zeta}
\end{equation}

The controller dynamics comprise a fast and a slow part. When the sequestration mode is fast compared to the modulator turnover, $\eta z_\Sigma\gg\gamma_\mathrm{p}$ with $z_\Sigma=z_1^*+z_2^*$, which holds at the design points, e.g., with $\eta z_\Sigma=2.80$ for the non-cooperative receiver, the sum $z_1+z_2$ rapidly relaxes onto the manifold $z_1z_2=\mu/\eta$, and the slow variable becomes the difference $\varepsilon=z_1-z_2$, i.e., the integrated error. On this manifold, $z_1=\tfrac12\big(\varepsilon+\sqrt{\varepsilon^2+4\mu/\eta}\big)$, and the fraction $\rho\equiv\dd z_1/\dd\varepsilon=z_1^*/z_\Sigma$ quantifies the portion of a change in the error that appears in $Z_1$, and hence, in the actuation~\cite{Olsman2019hard}. Linearizing about the steady state yields the reduced loop
\begin{equation}
\delta\dot\varepsilon=-\theta\zeta^*\delta y-\gamma_\mathrm{c}\delta\varepsilon,\qquad \delta\dot y=k\rho\,\delta\varepsilon-\gamma_\mathrm{p}\delta y,
\label{eq:reduced}
\end{equation}
which, for $\gamma_\mathrm{c}=0$, is an integrator in series with a first-order process. The loop transfer function $L(s)=b/[s(s+\gamma_\mathrm{p})]$, with the loop gain $b=k\rho\theta\zeta^*$, has unit magnitude at the crossover frequency
\begin{equation}
\omega_\mathrm{c}^2=\tfrac12\Big(\sqrt{\gamma_\mathrm{p}^4+4b^2}-\gamma_\mathrm{p}^2\Big),
\label{eq:wc}
\end{equation}
which is the bandwidth of the receiver. For small changes in the logarithm of the attenuation, the reduced loop maps $\delta\ln g$ to $\delta\ln K_\mathrm{D}$ through the tracking transfer function $T(s)=b/(s^2+\gamma_\mathrm{p}s+b)$, and to the tracking error through $1-T(s)$, which satisfies $|1-T(j\omega)|\approx\omega\gamma_\mathrm{p}/b$ at low frequencies. The time constant of the slowest mode of~(\ref{eq:reduced}), $\tau=2/\big(\gamma_\mathrm{p}-\sqrt{\max(\gamma_\mathrm{p}^2-4b,0)}\big)$, which reduces to $\tau\approx\gamma_\mathrm{p}/b$ for a slow loop, is hereafter referred to as the \emph{loop time constant}. The low-frequency behavior of $1-T(s)$ defines a second timescale, the \emph{tracking time constant} $\tau_\mathrm{LF}=\gamma_\mathrm{p}/b$, which equals $\tau/[1-1/(\gamma_\mathrm{p}\tau)]$ in the overdamped regime, and is $12.6$ at the design points, compared to $\tau=10.1$. The two coincide only for a slow loop, and Section~\ref{sec:bep} uses $\tau_\mathrm{LF}$ wherever the distinction matters. Both the bandwidth and the loop time constant improve with increasing loop gain $b$. The slowest time constant of the full three-species loop is listed alongside $\tau$ in Table~\ref{tab:design}.

The loop gain also determines the noise level at which the receiver holds the steady state. The loop is driven by two noise sources, the first of which is the randomness of the reactions in~(\ref{eq:aif}). In the chemical Langevin description of the reduced loop~\cite{Gillespie2000langevin}, the noise driving the error has the intensity $\Gamma_\varepsilon=\mu+\theta\tb(y^*)=2\mu$, i.e., the total production rate of the two controller species, since the annihilation reaction removes one $Z_1$ and one $Z_2$ simultaneously, and thus, does not contribute to the noise of their difference~\cite{Briat2016antithetic}. The noise driving the modulator has the intensity $\Gamma_y=kz_1^*+\gamma_\mathrm{p}y^*=2\gamma_\mathrm{p}y^*$. The second noise source is the self-noise induced by the random transmitted bits. Within the averaging window of the loop, the sensing propensity switches by $\theta\Delta\theta$ with the bit, where $\Delta\theta=\theta_1-\theta_0$ is the occupancy separation. For independent and equiprobable bits, each lasting $T_\mathrm{sym}$, this switching has the covariance $(\theta\Delta\theta/2)^2(1-|t|/T_\mathrm{sym})$ for $|t|<T_\mathrm{sym}$, and for a loop that is slow compared to the symbol duration, its effect is equivalent to that of white noise with intensity
\begin{equation}
D_\mathrm{bits}=\frac{\theta^2\,\Delta\theta^2\,T_\mathrm{sym}}{4},
\label{eq:dbits}
\end{equation}
which enters the production of $Z_2$, and unlike the reaction noise, does not decrease with the system size. Solving the stationary Lyapunov equation of~(\ref{eq:reduced}) with the total error intensity $\Gamma_\varepsilon+\Omega D_\mathrm{bits}$ yields the variance of the modulator concentration as
\begin{equation}
\Var(\delta y)=\frac{1}{\Omega}\bigg[y^*+\frac{k\rho\,\mu}{\gamma_\mathrm{p}\theta\zeta^*}\bigg]+\frac{k\rho\,D_\mathrm{bits}}{2\gamma_\mathrm{p}\theta\zeta^*}.
\label{eq:vardy}
\end{equation}
The first term in~(\ref{eq:vardy}) is the birth-death noise of the modulator, the second term is the reaction noise of the controller, transmitted to the modulator and attenuated by the factor $\theta\zeta^*$, and the third term is the self-noise. The ratio of the third term to the second, $\Omega D_\mathrm{bits}/2\mu$, equals $2.8$ for $n_\mathrm{H}=1$ and $12.2$ for $n_\mathrm{H}=2.5$ at $\Omega=200$, which makes the self-noise the dominant noise source at physiological copy numbers of the network molecules. Its contribution decreases with the number of symbols within the averaging window, $\tau/T_\mathrm{sym}$, which is $20$ at the design points and $8$ for the fast co-regulated loop of Section~\ref{sec:coop}.

Raising the actuation rate $k$ at a fixed actuator gain increases the loop gain and, by~(\ref{eq:vardy}), the modulator variance, and thus trades faster tracking for larger adaptation noise. For the two co-regulated designs in Table~\ref{tab:design}, for instance, the full linear-noise model predicts $\Omega\Var(\delta y)=16.7$ at $k=0.500$ and $6.3$ at $k=0.129$, and the measured values are $14.3$ and $6.0$, respectively. Raising the actuator gain $\alpha$, on the other hand, lowers the modulator variance, but multiplies both the relative affinity variance, $\Var(\delta K_\mathrm{D})/K_\mathrm{D}^2=\alpha^2\Var(\delta y)$, and the noise at the detector input by $\alpha^2$. For this reason, the design in Section~\ref{sec:bep} compares the noise at the detector input. Finally, the full three-species Lyapunov solution, which retains the sequestration mode, exceeds~(\ref{eq:vardy}) by $3\,\%$ at the non-cooperative design point and by $16\,\%$ for the fast co-regulated loop. Table~\ref{tab:design} lists the full values, which agree with the simulations in Section~\ref{sec:res-hc} to within $1$ to $15\,\%$, depending on the design.
\begin{table}[t]
\centering
\begin{threeparttable}
\caption{Design Points and Performance Metrics of the Receivers}
\label{tab:design}
\scriptsize
\setlength{\tabcolsep}{2.4pt}\medmuskip=1mu
\begin{tabular}{@{}lccccc@{}}
\toprule
 & \makecell{non-\\cooperative} & mixed & cooperative & \makecell{co-regulated\\$\tau{=}4$} & \makecell{co-regulated\\$\tau{=}10$} \\
\midrule
$n_\mathrm{H}(y^*)$ & 1 & 1 & 2.5 & 2.50 & 2.50 \\
$\alpha$ & 0.384 & 0.776 & 0.257 & 1.000 & 1.000 \\
$k$ & 0.5 & 0.5 & 0.5 & 0.5 & 0.129 \\
$y^*$ & 2.60 & 1.29 & 3.89 & 1.00 & 1.00 \\
$\zeta^*$ & 0.085 & 0.103 & 0.082 & 0.319 & 0.319 \\
$\rho$ & 0.931 & 0.769 & 0.968 & 0.667 & 0.968 \\
$b=k\rho\theta\zeta^*$ & 0.0398 & 0.0398 & 0.0398 & 0.1064 & 0.0397 \\
$\omega_\mathrm{c}$ & 0.0785 & 0.0785 & 0.0785 & 0.1978 & 0.0785 \\
$\tau$ (full loop)\tnote{a} & 10.1\ (9.6) & 10.1\ (9.2) & 10.1\ (9.7) & 4.0\ (4.9) & 10.1\ (9.7) \\
$\Omega\Var(\delta y)$ &&&&& \\
\hspace{0.6em}chemical & 8.28 & 5.27 & 9.97 & 2.38 & 1.42 \\
\hspace{0.6em}self-noise & 15.5 & 10.7 & 73.4 & 14.3 & 4.9 \\
\hspace{0.6em}measured & 23.5 & 15.7 & 77.7 & 14.3 & 6.0 \\
$\zeta_\mathrm{crit}$ & 3.54 & 2.83 & 4.66 & 3.00 & 18.13 \\
$\zeta_\mathrm{crit}/\zeta^*$ & 41 & 27 & 57 & 9 & 57 \\
$\Delta\theta$ (design) & 0.333 & 0.333 & 0.700 & 0.700 & 0.700 \\
$\Delta\theta$ (measured) & 0.332 & 0.331 & 0.685 & 0.658 & 0.682 \\
$\sigma$ (measured)\tnote{b} & 0.0444 & 0.0442 & 0.0590 & 0.1037 & 0.0732 \\
molecules\tnote{c} & 1080 & 593 & 1580 & 500 & 1003 \\
BEP &&&&& \\
\hspace{0.6em}$\Omega{=}200$ & $1.7\times10^{-4}$ & $1.7\times10^{-4}$ & $2.5\times10^{-6}$ & $4.3\times10^{-3}$ & $3.1\times10^{-5}$ \\
\hspace{0.6em}averaged input\tnote{d} & $1.0\times10^{-5}$ & $8.9\times10^{-6}$ & $<10^{-8}$ & $2.9\times10^{-8}$ & $<10^{-8}$ \\
\hspace{0.6em}$\Omega{=}20$ & $1.1\times10^{-2}$ & $7.0\times10^{-3}$ & $1.4\times10^{-3}$ & $3.7\times10^{-2}$ & $1.1\times10^{-2}$ \\
\bottomrule
\end{tabular}
\begin{tablenotes}[flushleft]
\scriptsize
\item[] All entries are for $\Omega=200$ unless stated otherwise and use an ideal integrator, $\gamma_\mathrm{c}=0$. Variances are $\Omega$-scaled coefficients.
\item[a] Reduced-loop time constant, with the slowest time constant of the full three-species loop in parentheses.
\item[b] Measured bit-1 standard deviation for the co-regulated receivers, as their two symbol-conditioned output distributions have different variances.
\item[c] Controller and modulator molecules at $\Omega=200$.
\item[d] Controller sensing the symbol-averaged occupancy directly, i.e., without self-noise.
\end{tablenotes}
\end{threeparttable}
\end{table}

\subsection{Detection Variance, Bit Error Probability, and Loop Speed}\label{sec:bep}
The detector observes the occupancy, and not the modulator. For a receptor with a fixed $n_\mathrm{H}$, the occupancies at the two received levels have the same slope, $\partial\theta_\mathrm{B}(c_s)/\partial y=\zeta^*$ by~(\ref{eq:zeta}), and thus, both fluctuate with the variance $(\zeta^*)^2\Var(\delta y)$. Moreover, conditional on the modulator state and the transmitted symbol, the bound fraction of $N_\mathrm{R}$ independent receptor units has the variance $p_s(1-p_s)/N_\mathrm{R}$, with $p_s=\theta_\mathrm{B}(c_s;K_\mathrm{D}(y),n_\mathrm{H})$, which gives the sampling term $\theta_0\theta_1/N_\mathrm{R}$ at the symmetric operating point. This term characterizes the instantaneous detector studied here, and is not a universal limit, as a detector with temporal integration would require a model incorporating the receptor correlation times~\cite{Berg1977physics,Bialek2005physical,Kaizu2014berg}. Within the fixed-$n_\mathrm{H}$ local approximation used here, the variance at the detector input is the same for both symbols, and is given by
\begin{equation}
\sigma^2=(\zeta^*)^2\,\Var(\delta y)+\frac{\theta_0\theta_1}{N_\mathrm{R}}.
\label{eq:sigma}
\end{equation}
At high copy numbers of the network molecules, where the occupancy fluctuations are approximately Gaussian, the error probability can be approximated as $Q(\Delta\theta/2\sigma)$, where $Q(\cdot)$ is the Gaussian Q-function. We use this Gaussian expression only to locate the design points and to derive the loop-speed rule below. All error probabilities reported in this paper are computed from the exact expression~(\ref{eq:bep}), averaged over the simulated modulator fluctuations. Stochastic estimates below $10^{-8}$ depend on rare modulator excursions that the ensemble does not resolve, and are therefore reported only as lying below this \emph{reporting threshold}. Analytically evaluated benchmarks, such as the error probability of a non-adaptive receiver, on the other hand, are reported as computed.

Equations~(\ref{eq:vardy}) and~(\ref{eq:sigma}) have three consequences that determine the design. First, on a static channel, a larger receptor gain only increases the noise at the detector input, since the controller and self-noise terms of $\sigma^2$ increase with $\zeta^*$. Second, a non-adaptive receptor with $K_\mathrm{D}$ fixed at $\sqrt{c_0c_1}$ carries neither controller noise nor self-noise, and thus, on a static channel, it outperforms every adaptive receiver, by a factor of $270$ for $n_\mathrm{H}=1$ at $g=1$. For the independent and identically distributed (iid) symbol source and the feedback architecture studied here, adaptation carries a \emph{noise cost}, because the controller responds both to channel changes and to the fluctuations of its measurement, and whether the tracking benefit outweighs this cost is evaluated in Section~\ref{sec:res-headline}. Third, since the self-noise term does not decrease with $\Omega$, the error probability of the adaptive receiver flattens at high copy numbers, as the sweep in Section~\ref{sec:res-lc} indicates, toward a floor set by the length of the averaging window in symbols, and not by the copy numbers. This floor applies to the iid equiprobable symbol source, as source correlations, e.g., introduced by balanced coding, can change the low-frequency forcing of the loop, which is not evaluated here.

The loop speed is therefore a compromise between noise and tracking, analogous to the trade-off between misadjustment and tracking lag in adaptive filters~\cite{Haykin2014adaptive}, which can be turned into a design rule using~(\ref{eq:vardy}) together with $T(s)$. In the reduced white-noise model, the self-noise term of~(\ref{eq:sigma}) is exactly $\sigma_\mathrm{bits}^2=b\Delta\theta^2T_\mathrm{sym}/(8\gamma_\mathrm{p})=\Delta\theta^2T_\mathrm{sym}/(8\tau_\mathrm{LF})$, such that the signal-to-noise ratio (SNR) set by the self-noise, $\Delta\theta/\sigma_\mathrm{bits}=\sqrt{8\tau_\mathrm{LF}/T_\mathrm{sym}}$, depends only on the tracking time, and equals $14.2$ at the design points, independently of the cooperativity. With the calibration $y^*=1/\alpha$, and with $\rho$ frozen at its design value, the reaction terms of~(\ref{eq:sigma}) also scale as $1/\tau_\mathrm{LF}$, such that the loop-dependent part of the noise variance at the detector input becomes $A/\tau_\mathrm{LF}$, where $A=\Delta\theta^2T_\mathrm{sym}/8+[n_\mathrm{H}\theta_0\theta_1\gamma_\mathrm{p}/(k\rho\theta)+\mu/\theta^2]/\Omega$. The sampling term of~(\ref{eq:sigma}) does not depend on the loop speed. Note that $A$ is not a controller-independent constant, but a coefficient of the tested design family, since $\rho$ changes with $\alpha$. On the periodic channel $g(t)=\exp[-\tfrac12\ln S\,(1-\cos2\pi t/T_\mathrm{c})]$ of Section~\ref{sec:res-headline}, the affinity lags the time-varying optimum by $\delta\ln K_\mathrm{D}=\tfrac12\ln S\,|1-T(j\omega)|\cos\varphi$, where $\omega=2\pi/T_\mathrm{c}$ and $\varphi$ is the phase. This lag shifts the logit of both occupancies by $n_\mathrm{H}\delta\ln K_\mathrm{D}$, and the Gaussian approximation of the error probability becomes the phase average
\begin{equation}
\BEP\approx\Big\langle\tfrac12\,Q\Big(\frac{\tfrac12-\theta_0(\varphi)}{\sigma}\Big)+\tfrac12\,Q\Big(\frac{\theta_1(\varphi)-\tfrac12}{\sigma}\Big)\Big\rangle_\varphi,
\label{eq:surrogate}
\end{equation}
where $\theta_s(\varphi)$ denote the occupancies under the detuning at phase $\varphi$, and $\sigma$ is the static-channel value given by~(\ref{eq:sigma}). At low frequencies, the detuning amplitude is $\pi\ln S\,\tau_\mathrm{LF}/T_\mathrm{c}$, and its contribution to the variance of the occupancy is $B\tau_\mathrm{LF}^2$, where $B=(n_\mathrm{H}\theta_0\theta_1\pi\ln S)^2/(2T_\mathrm{c}^2)$. Minimizing $A/\tau_\mathrm{LF}+B\tau_\mathrm{LF}^2$ then yields the leading-order optimum
\begin{equation}
\tau_\mathrm{LF,opt}=\Big(\frac{A}{2B}\Big)^{1/3}\propto T_\mathrm{c}^{2/3}\,(\ln S)^{-2/3},
\label{eq:tauopt}
\end{equation}
which grows with the coherence time and shrinks with the variation depth. The design points in Table~\ref{tab:design} were located using the crossover condition $\omega_\mathrm{c}\ge2\pi/T_\mathrm{c}$ at the design coherence time $T_\mathrm{c}=80$, which places them at $\tau_\mathrm{LF}=12.6$. The candidate~(\ref{eq:tauopt}), however, is unconstrained. In a design family in which only the actuator gain $\alpha$ varies, the tracking time $\tau_\mathrm{LF}=\gamma_\mathrm{p}/b(\alpha)$ has a minimum, which, for the non-cooperative family, is $\tau_\mathrm{LF,min}=6.36$ at $\alpha=1.41$, and the implementable candidate is $\max\{\tau_\mathrm{LF,min},(A/2B)^{1/3}\}$. As~(\ref{eq:tauopt}) is a leading-order noise-tracking rule, and not an exact minimizer of the error probability, we compare the Gaussian approximation~(\ref{eq:surrogate}) and the projected rule with the full nonlinear receiver in Section~\ref{sec:res-headline}.

\subsection{Stability}\label{sec:stab}
As a larger receptor gain increases the bandwidth, we determine the largest receptor gain $\zeta$ for which the loop remains stable. Although the reduced loop~(\ref{eq:reduced}) is stable for any $\zeta$, the relevant question concerns the full loop~(\ref{eq:ode}), including the fast sequestration mode eliminated by the reduction. The Jacobian of the full loop at the steady state, in the variables $(z_1,z_2,y)$, has the characteristic polynomial $s^3+a_2s^2+a_1s+a_0$, with $q=\gamma_\mathrm{c}(\eta z_\Sigma+\gamma_\mathrm{c})$ and
\begin{equation}
\begin{aligned}
a_2&=\eta z_\Sigma+2\gamma_\mathrm{c}+\gamma_\mathrm{p},\qquad a_1=q+\gamma_\mathrm{p}(\eta z_\Sigma+2\gamma_\mathrm{c}),\\
a_0&=\gamma_\mathrm{p}q+k\theta\zeta\,\eta z_1^*.
\end{aligned}
\end{equation}
The Routh--Hurwitz condition $a_2a_1>a_0$ gives the exact local stability boundary of the deterministic loop as
\begin{equation}
\zeta<\zeta_\mathrm{crit}=\frac{a_2a_1-\gamma_\mathrm{p}q}{k\theta\eta z_1^*}
\;\xrightarrow{\;\gamma_\mathrm{c}=0\;}\;
\frac{\gamma_\mathrm{p}\,(\gamma_\mathrm{p}+\eta z_\Sigma)}{k\theta\rho}.
\label{eq:zetacrit}
\end{equation}
The boundary~(\ref{eq:zetacrit}) differs from the production-degradation criterion of~\cite{Olsman2019hard}, which is derived in the strong-binding limit for a process with an additional production-degradation stage between the actuator and the sensed species, whose phase lag allows a limit cycle to form. In the present receiver, the sensed occupancy is an instantaneous function of the modulator, and the additional dynamics are those of the sequestration mode, whose rate $\eta z_\Sigma$ enters~(\ref{eq:zetacrit}) linearly. Hence, the loop is unconditionally stable in the strong-binding limit. As a reference in Fig.~\ref{fig:results}(c), we apply the bound of~\cite{Olsman2019hard} on the product of the production rates around the loop to the loop gain $b=k\rho\theta\zeta$, i.e., $S_\mathrm{f}=[k\rho\theta\zeta/(2\gamma_\mathrm{p}(\gamma_\mathrm{p}+\gamma_\mathrm{c})^2)]^{1/3}$, and refer to $S_\mathrm{f}=1$ as the \emph{normalized two-stage reference}. This reference is not a dynamically equivalent mapping between the two architectures, and for the $\alpha=1$ illustration, it lies at $\zeta=0.250\,\zeta_\mathrm{crit}$, i.e., a factor of $4$ below the boundary~(\ref{eq:zetacrit}).

We test~(\ref{eq:zetacrit}) in Fig.~\ref{fig:results}(c) by simulating the loop at the steady state for $\alpha=1$ with the logistic sensing function $\tb=1/\{1+\exp[-4\zeta(y-y^*)]\}$, which has the slope $\zeta$ at $y^*$, and allows $\zeta$ to be varied independently of the receptor map. For two values of $\Omega$, we record the excess kurtosis of the modulator fluctuation and the modulator variance propagated to the occupancy through the slope, $\Omega\zeta^2\Var(\delta y)$, which the linear noise analysis predicts to be independent of $\Omega$. Up to about half of $\zeta_\mathrm{crit}$, the excess kurtosis is close to zero, and the scaled variances coincide, as expected for a Gaussian fluctuation whose variance scales with $1/\Omega$. As $\zeta_\mathrm{crit}$ is approached and exceeded, the scaled variance ceases to be independent of $\Omega$, reaching $354$ at $\zeta=2\zeta_\mathrm{crit}$, and the excess kurtosis falls toward $-1.5$, i.e., the value for a sinusoid, which is the signature of a limit cycle. Both signatures develop around $\zeta_\mathrm{crit}$, and the reference $S_\mathrm{f}=1$, marked by the dashed line, does not predict the instability boundary. Note that this boundary is a local property of the deterministic loop, and the oscillation beyond it is established only by simulation. At the design points in Table~\ref{tab:design}, the ratio $\zeta_\mathrm{crit}/\zeta^*$ is at least $9$, and its smallest value over the entire range of actuator gains considered is $8.4$. Therefore, the receivers are far from instability, and the relevant design trade-off is between speed and noise.

\subsection{Receptor Cooperativity}\label{sec:coop}
According to~(\ref{eq:centering}), the occupancy separation depends only on the level ratio and the cooperativity, $\Delta\theta=(r^{n_\mathrm{H}/2}-1)/(r^{n_\mathrm{H}/2}+1)$. For $r=4.0$, it increases from $0.333$ at $n_\mathrm{H}=1$ to $0.700$ at $n_\mathrm{H}=2.5$, while the sampling term $\theta_0\theta_1/N_\mathrm{R}$ decreases, since the two occupancies move toward zero and one, respectively. Its receptor gain~(\ref{eq:zeta}) also changes, but the design procedure re-selects $\alpha$ such that the bandwidth meets the tracking requirement $\omega_\mathrm{c}\ge2\pi/T_\mathrm{c}$ with equality, and the two receivers then share the same loop gain $b$, as listed in Table~\ref{tab:design}. This advantage, however, is not free of cost. At the detector, the controller noise of the cooperative receiver is $11\,\%$ higher than that of the non-cooperative receiver, and its self-noise is $4.4\times$ higher, since $D_\mathrm{bits}$ grows with $\Delta\theta^2$. Moreover, since $\sigma_\mathrm{bits}^2/\Delta\theta^2=T_\mathrm{sym}/(8\tau_\mathrm{LF})$ at matched loop speed, cooperativity does not improve the SNR set by the self-noise. Its benefit rather comes from the receptor sampling term, which decreases as the separation increases, and from the exact tails of~(\ref{eq:bep}). This benefit outweighs the noise cost, such that the cooperative receiver has the lowest error probability at every copy number considered in Section~\ref{sec:results}, by a factor of $66$ at $\Omega=200$.

The situation is different when the cooperativity is co-regulated by the same modulator, $n_\mathrm{H}(y)=n_\mathrm{base}+\kappa y$. In this case, the slopes of the two occupancies become $\partial\theta_\mathrm{B}(c_s)/\partial y=\theta_0\theta_1\,[n_\mathrm{H}\alpha\pm\tfrac12\kappa\ln r]$, with the negative sign for bit 0. The $\kappa$ term cancels in the average and leaves the local loop gain unchanged, but the two symbol-conditioned slopes differ, such that the fluctuations at the detector input must be assessed separately for the two symbols. Co-regulation also constrains the design. Since $n_\mathrm{H}(y^*)=n_\mathrm{base}+\kappa y^*$ with $y^*=1/\alpha$ ties the equilibrium cooperativity to the actuator gain, a comparison with the cooperative receptor at equal cooperativity, $n_\mathrm{H}(y^*)=2.5$, fixes $\alpha=1$, and hence, the receptor gain $\zeta^*=0.319$. With the remaining controller rates at their calibration values, this receptor gain results in a loop gain of $b=0.1064$, compared to $0.0398$ for the non-cooperative design, a loop time constant of $4.0$, and an averaging window of only $8$ symbols, such that the self-noise dominates the error probability, as shown in Section~\ref{sec:res-hc}. The constraint acts on $\zeta^*$, not on $b$. Reducing the actuation rate to $k=0.129$ restores the loop gain of the design, giving $\tau=10.1$, while $n_\mathrm{H}(y^*)=2.5$ remains unchanged, and this bandwidth-matched co-regulated design is listed next to the fast one in Table~\ref{tab:design}. Under an attenuation $g$ of the channel, the co-regulated equilibrium moves to $y^*=1-\ln g$, such that both the cooperativity $n_\mathrm{H}(y^*)=n_\mathrm{base}+\kappa(1-\ln g)$ and the receptor gain $\bar\zeta=n_\mathrm{H}(y^*)\alpha\theta_0\theta_1$ depend on the attenuation, whereas~(\ref{eq:zeta}) is invariant under attenuation for a fixed $n_\mathrm{H}$. This dependence becomes important on the time-varying channel of Section~\ref{sec:res-headline}.

\textit{Optimality under co-regulation.} When $n_\mathrm{H}$ varies with the modulator, the reference $\mu/\theta=1/2$ still imposes the symmetric operating point $\theta_0+\theta_1=1$ at an attainable deterministic equilibrium, but it does not generally minimize the error probability along the joint affinity--cooperativity actuator curve, and the optimal reference of the co-regulated receiver is left open.

\begin{remark}[Mixed arrays and cooperativity-only tuning]
In a mixed array with a tunable fraction $w>1/2$, the setpoint remains attainable for every occupancy $\theta_\mathrm{f}$ of the fixed population, since the sensed bound fraction spans $[(1-w)\theta_\mathrm{f},\,w+(1-w)\theta_\mathrm{f}]$, provided that the required affinity lies within the range of the actuator. Away from the calibration point, however, the tunable population compensates for the fixed one, and is pushed away from its own symmetric operating point. The occupancy separation of the array therefore shrinks. Tuning the cooperativity alone with a fixed $K_\mathrm{D}$, on the other hand, is not effective, since $\partial\theta_\mathrm{B}/\partial n_\mathrm{H}=\theta_\mathrm{B}(1-\theta_\mathrm{B})\ln(c/K_\mathrm{D})$ has opposite signs at the two received levels, and its average vanishes at the setpoint. Therefore, cooperativity is a design parameter of the receptor, and not a control variable of the loop.
\end{remark}

\subsection{Leaky Integration}\label{sec:leaky}
In a growing cell, the controller species are diluted, and the integrator becomes leaky~\cite{Qian2018leaky}. In this case, subtracting the two controller balances of~(\ref{eq:ode}) at steady state gives
\begin{equation}
\tb(y^*)=\frac{\mu}{\theta}+\frac{\gamma_\mathrm{c}\,(z_2^*-z_1^*)}{\theta},
\label{eq:leaky}
\end{equation}
which shows that the steady-state affinity moves away from the geometric mean of the received levels by an amount proportional to the leak rate and to the imbalance between the concentrations of the two controller species. For a small leak, the resulting detuning follows from the zero-leak state as $\delta\ln K_\mathrm{D}=\alpha\gamma_\mathrm{c}(z_1^*-z_2^*)/(\theta\zeta^*)$, which predicts the equilibrium affinity to within one percent up to $\gamma_\mathrm{c}=10^{-2}\gamma_\mathrm{p}$.

The effect of the leak is significant. For the non-cooperative receiver, a leak of $10^{-3}\gamma_\mathrm{p}$ increases the deterministic error probability by $1.02\times$, and a leak of $10^{-2}\gamma_\mathrm{p}$ by $2.85\times$, whereas a leak of $10^{-1}\gamma_\mathrm{p}$ moves the affinity by a factor of $1.40$ and increases the error probability by $4800\times$. A ten-percent increase in the noiseless error probability requires $\gamma_\mathrm{c}<2.3\times10^{-3}\,\gamma_\mathrm{p}$ for the non-cooperative receiver and $\gamma_\mathrm{c}<4.1\times10^{-4}\,\gamma_\mathrm{p}$ for the cooperative receiver, whose narrower error tails make it more sensitive to detuning. The relevant budget for the adaptive receiver, however, is its noisy error probability, which the detuning increases less steeply. The Gaussian approximation based on~(\ref{eq:sigma}), with the zero-leak variance held fixed, places the same penalty at $\gamma_\mathrm{c}=5.0\times10^{-3}\,\gamma_\mathrm{p}$ and $1.3\times10^{-3}\,\gamma_\mathrm{p}$ for the non-cooperative and cooperative receivers, respectively.

A stochastic simulation of the non-cooperative receiver with the random symbol stream at $\Omega=200$, using the same streams for every leak rate, measures the leak penalty directly. The ratio $P_\mathrm{e}(\gamma_\mathrm{c})/P_\mathrm{e}(0)$ is $0.97$ at $\gamma_\mathrm{c}=5\times10^{-3}\gamma_\mathrm{p}$, $1.16$ at $10^{-2}\gamma_\mathrm{p}$, and $1.96$ at $2\times10^{-2}\gamma_\mathrm{p}$, with 95\,\% confidence intervals of a few percent, and the ten-percent penalty is reached at $\gamma_\mathrm{c}=8.0\times10^{-3}\,\gamma_\mathrm{p}$ on the static channel. The leak lowers the mean sensed occupancy to $0.479$ at $2\times10^{-2}\gamma_\mathrm{p}$, and reduces the scaled modulator variance $\Omega\Var(\delta y)$ from $23.6$ to $20.4$, which partly compensates for the detuning. Hence, the Gaussian approximation is conservative, and the measured tolerance, which applies to the static channel and not to a time-varying channel, is the one carried into the physical design in Section~\ref{sec:bio}. For the cooperative receiver, whose error probability is too small to resolve a ten-percent change with 2000 trajectories, the same sweep bounds the penalty at $2\times10^{-2}\gamma_\mathrm{p}$ only by a factor of $1.77$.

Overall, the practical non-ideality of the design is leaky integration, not stability, and tolerating it requires the controller species to turn over about two orders of magnitude more slowly than the modulator.

\section{Numerical Results}\label{sec:results}
In this section, we evaluate the performance of the proposed receiver. Unless stated otherwise, the results are obtained by stochastic simulation of the reaction network~(\ref{eq:aif}) with the full nonlinear receptor map~(\ref{eq:hill}), driven by a stream of independent equiprobable bits with $T_\mathrm{sym}=0.5$, generated independently for every trajectory, such that the controller senses the instantaneous occupancy and performs the averaging itself. No linearization is used, except for the stability sweep with logistic sensing in Section~\ref{sec:stab} and the linearized-sensing check in Section~\ref{sec:res-hc}. High-copy ensembles ($\Omega=200$) are simulated by Poisson tau-leaping~\cite{Gillespie2001tau} with a step of $0.01$ time units ($0.005$ in the copy-number sweep), where the fraction of negative counts clipped to zero is below $10^{-4}$, and low-copy ensembles by the exact direct method~\cite{Gillespie1977exact}. Where both methods were applied, their error probabilities agree within the bootstrap intervals, and under the CIR-based model of Section~\ref{sec:res-tracking}, halving or quartering the step changes the error probabilities by at most $9\,\%$.

Every run that estimates an error probability discards a burn-in of at least $150$ time units, i.e., at least $15$ loop time constants, and records the modulator concentration at the symbol boundaries, where it is independent of the bit of the symbol that starts there. Hence, averaging the two conditional tails in~(\ref{eq:bep}) over the recorded ensemble is exact for a detector that samples at the start of a symbol, which is an idealization of the instantaneous Hill model. Conditioning on the transmitted bit at later detection phases, from $0.2$ to $1$ of the symbol, changes the error probability by at most $1.9$ to $9.5\,\%$, depending on the receiver. Confidence intervals are 95\,\% bootstrap intervals over trajectories. Runs in which the controller senses the symbol-averaged occupancy~(\ref{eq:g}) directly, i.e., without self-noise, are reported as \emph{averaged-input references}. The receivers of Table~\ref{tab:design} are simulated at their design points with an ideal integrator, $\gamma_\mathrm{c}=0$.

\subsection{Symbol Averaging, Channel Tracking, and the CIR-Based Model}\label{sec:res-tracking}
\begin{figure}[t]
\centering
\includegraphics[width=\columnwidth]{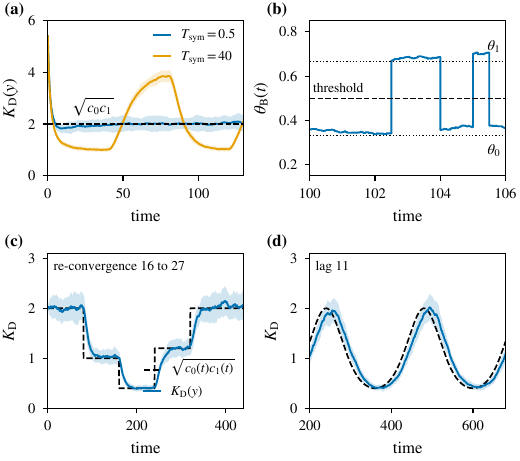}
\caption{Symbol averaging and channel tracking by the non-cooperative receiver ($\Omega=200$, 50 trajectories with independent symbol streams, shaded regions indicate one standard deviation). (a)~$K_\mathrm{D}$ for a random symbol stream with $T_\mathrm{sym}=0.5$ and an alternating symbol stream with $T_\mathrm{sym}=40$. (b)~Receptor occupancy of one trajectory under the random symbol stream. (c)~Response to attenuation steps applied every $T_\mathrm{c}=80$. (d)~Response to a sinusoidal attenuation of period $240$. Times are in model units.}
\label{fig:tracking}
\end{figure}
We first verify the two-timescale operation of the receiver. Fig.~\ref{fig:tracking}(a) shows the ensemble mean and standard deviation of $K_\mathrm{D}$ for the non-cooperative receiver, starting from an untuned state with $y=0$. Under a random bit stream with $T_\mathrm{sym}=0.5$, which is much shorter than the loop time constant, the affinity converges within approximately three loop time constants to $K_\mathrm{D}=2.03$, compared to $\sqrt{c_0c_1}=2.0$, and fluctuates about this value with a relative standard deviation of $14\,\%$, mostly due to the self-noise. As shown in Fig.~\ref{fig:tracking}(b), the occupancy alternates with the bits between $0.333$ and $0.664$, close to the values $1/3$ and $2/3$ predicted by~(\ref{eq:centering}). Hence, the controller holds the average occupancy, while the detector observes the full occupancy separation. For an alternating bit sequence with $T_\mathrm{sym}=40$, which exceeds the loop time constant, the loop instead retunes the affinity within each symbol and drives the occupancy toward the setpoint for both bits, to $0.50$ and $0.51$ by the end of the symbol, and thus removes the occupancy difference between the bits. Therefore, the first inequality in~(\ref{eq:window}) is a necessary condition for detection.

Fig.~\ref{fig:tracking}(c) shows the response of the receiver to attenuation steps by factors of two to three every $T_\mathrm{c}=80$. After each step, the affinity re-converges to the new optimum $\sqrt{c_0(t)c_1(t)}$ within $1.5$ to $2.7$ loop time constants. Under a sinusoidal attenuation with variation depth $S=5$ and period $240$, shown in Fig.~\ref{fig:tracking}(d), the ensemble-mean affinity follows the time-varying optimum with a lag of $1.1$ loop time constants, as expected from the tracking transfer function $T(s)$ of Section~\ref{sec:gain}. Its mean deviation from the optimum, which we refer to as the \emph{tracking error}, is $17\,\%$.

The diffusive tail of $h(t,d)$ adds ISI to every symbol. In the simulations, the tail is kept for $80$ symbols, and at the tested \emph{normalized symbol duration} $T_\mathrm{sym}/t_\mathrm{peak}=3$, it retains $88\,\%$ of the cumulative ISI of the free-diffusion pulse, i.e., the channel is a truncated diffusion channel with finite memory. All quantities of the experiments with ISI, including the setpoint, the best non-adaptive receiver, and a \emph{genie-aided receiver} that retunes its affinity exactly and without noise, are evaluated on the same truncated records. Under this ISI and the piecewise-constant signal model, the equilibrium affinity agrees to within one percent with the optimum $\sqrt{(c_0+\bar c_\mathrm{int})(c_1+\bar c_\mathrm{int})}$ predicted by the mean interference $\bar c_\mathrm{int}$~\cite{Kuscu2023adaptive}, provided that $K_\mathrm{base}$ is raised from $5.44$ to $20.00$ to bring this optimum within the range of the actuator.

The controller senses the receptor occupancy continuously, whereas the detector samples the occupancy at the peak of the diffusive pulse. As the piecewise-constant signal model does not capture this difference, we use a \emph{CIR-based model}, in which the received concentration is the superposition of the CIRs of the current and previous symbols, i.e., $c(t)=g(t)\sum_j N_{\mathrm{L}|b_j}h(t-jT_\mathrm{sym},d)$, where $b_j$ denotes the $j$th transmitted bit, and the CIR is normalized such that the isolated peak concentrations equal $c_0$ and $c_1$. As the concentration is below its peak during most of the symbol, the time average of the bit-averaged occupancy, evaluated at the affinity that is optimal for the peak samples, is below one half, and a controller with a setpoint of one half settles below that affinity. Under the strong ISI described above, with $g=1$ and $K_\mathrm{base}=20$, the affinity $K_\mathrm{peak}=5.80$, which minimizes the error probability of a fixed receptor sampled at the pulse peaks over the same interference histories, yields an error probability of $8.55\times10^{-2}$. The loop with a setpoint of one half, however, settles at $K_\mathrm{D}=4.79\pm0.05$, i.e., $17\,\%$ below $K_\mathrm{peak}$, with an error probability of $1.82\times10^{-1}$ at the peaks.

This mismatch can be repaired by exploiting the homogeneity of~(\ref{eq:hill}) in $(c,K_\mathrm{D})$. The waveform-averaged occupancy at $K_\mathrm{peak}$, which is $s^*=0.451$ for this waveform, depends on the normalized waveform and on the symbol statistics, but it is invariant under a common attenuation of the waveform. With the reference rate set to $\mu=s^*\theta$, the affinity settles at $5.84\pm0.06$, with an error probability of $1.10\times10^{-1}$, i.e., a factor of $1.28$ above that of the fixed receptor at $K_\mathrm{peak}$, and under an attenuation by a factor of $0.3$, it settles at $1.75$, compared to the optimum $1.74$. For an attenuation that varies with depth $S=5$ and period $T_\mathrm{c}=240$, the adaptive receiver with setpoint $s^*$ reaches an error probability of $1.72\times10^{-1}$, compared to $3.48\times10^{-1}$ for the best non-adaptive receiver and $8.55\times10^{-2}$ for the genie-aided receiver, i.e., a reduction of the BEP by a factor of $2.0$, which grows as the ISI weakens, as shown in Section~\ref{sec:res-headline}. Therefore, robust perfect adaptation to a common attenuation is retained under the CIR-based model, with a setpoint calibrated once to the waveform and the symbol statistics. However, a change of the transmitter-receiver distance alters the pulse shape as well as its amplitude, and the demonstrated autonomy applies only to the amplitude.

\subsection{Comparison with Non-Adaptive Reception}\label{sec:res-headline}
\begin{figure}[t]
\centering
\includegraphics[width=\columnwidth]{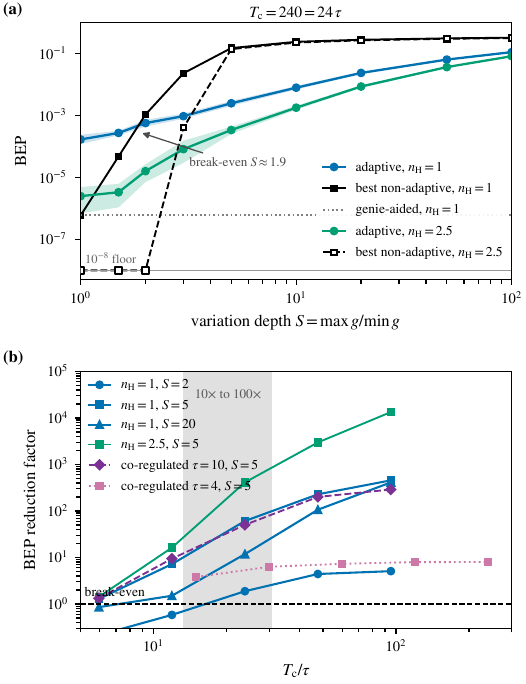}
\caption{Adaptive and non-adaptive receivers on the same time-varying channel with random symbol streams ($N_\mathrm{R}=200$, 48 trajectories, 192 for the cooperative receiver in (a), 95\,\% bootstrap intervals). (a)~BEP as a function of the variation depth $S$ of the channel for $T_\mathrm{c}=240$. (b)~BEP reduction factor as a function of $T_\mathrm{c}/\tau$, where the shaded band spans the coherence times over which the factor for $n_\mathrm{H}=1$ and $S=5$ rises from 10 to 100. Values below $10^{-8}$ are drawn at this floor with open markers.}
\label{fig:headline}
\end{figure}
Fig.~\ref{fig:headline} shows the benefit of adaptation on a time-varying channel, in which both received levels are attenuated by a factor that varies from one to $1/S$ and back with period $T_\mathrm{c}$, where $S$ denotes the \emph{variation depth} of the channel. We compare the adaptive receiver, which has a fixed threshold and no CSI, with the best non-adaptive receiver, whose fixed $K_\mathrm{D}$ and threshold are both optimized with full knowledge of the channel statistics, and with the genie-aided receiver, which marks the floor set by the receptor array. The compared receivers have the same cooperativity, and the non-adaptive benchmark of the co-regulated receivers is fixed at $n_\mathrm{H}=2.5$.

Fig.~\ref{fig:headline}(a) shows the results for $T_\mathrm{c}=240$, i.e., $24$ loop time constants. On a static channel ($S=1$), the non-adaptive receiver performs better, as predicted in Section~\ref{sec:bep}, by a factor of $270$ to $490$ for $n_\mathrm{H}=1$ over the coherence times considered. The adaptive non-cooperative receiver becomes superior at $S\approx1.9$, and the improvement is largest at $S=5$, with error probabilities of $2.5\times10^{-3}$ and $1.5\times10^{-1}$ for the adaptive and non-adaptive receivers, i.e., a factor of $61$. At larger variation depths, the error probability of the non-adaptive receiver rises above one quarter, while that of the adaptive receiver degrades gradually. The improvement nevertheless decreases to factors of $12$ at $S=20$ and $2.9$ at $S=100$, as the loop can no longer follow the entire excursion of the received levels within one period. A non-adaptive cooperative receptor with an optimized threshold, on the other hand, already covers a twofold variation, with an error probability of $5.2\times10^{-11}$ at $S=2$, compared to $1.6\times10^{-5}$ for the adaptive receiver. Hence, the adaptation of cooperative receptors becomes beneficial only from $S\approx2.9$, and its advantage exceeds that of the non-cooperative receiver from $S=5$ on, with factors of $410$ at $S=5$ and $130$ at $S=10$.

The rate of channel variation is as important as its magnitude. Fig.~\ref{fig:headline}(b) shows the ratio of the error probabilities of the best non-adaptive and the adaptive receivers, i.e., the \emph{BEP reduction factor}, as a function of the coherence time in units of the loop time constant. When the coherence time is only $6$ loop time constants, the non-cooperative receiver reduces the BEP by at most a factor of $1.3$. At $S=5$, its reduction factor grows from $7.2$ at $12$ time constants to $460$ at $95$, and that of the cooperative receiver from $16$ to $13000$. At $S=2$, the reduction factor of the non-cooperative receiver remains below $5.1$, whereas at $S=20$, it reaches $410$ at the longest coherence time. Therefore, the timescale condition~(\ref{eq:window}) is necessary but not sufficient. Under the piecewise-constant signal model, and for the tested range of variation depths, symbol rate, and copy number, the BEP reduction factor at $S=5$ reaches ten at
\begin{equation}
T_\mathrm{c}\approx13\,\tau\ (n_\mathrm{H}=1),\qquad T_\mathrm{c}\approx10\,\tau\ (n_\mathrm{H}=2.5),
\label{eq:rule}
\end{equation}
and one hundred at $31\,\tau$ and $18\,\tau$, respectively, as indicated by the shaded band in Fig.~\ref{fig:headline}(b). This empirical operating region holds for the design loop speed.

The fast co-regulated receiver, whose averaging window spans only $8$ symbols, reduces the BEP by at most a factor of $10$, since its self-noise dominates irrespective of the channel. The bandwidth-matched co-regulated receiver removes this noise penalty, with reduction factors of $50$ at $T_\mathrm{c}=240$ and $290$ at $T_\mathrm{c}=960$ for $S=5$, but although it is better than the non-cooperative receiver on the static channel, its error probabilities are $1.12\times$ and $1.46\times$ those of the non-cooperative receiver, and $8.2\times$ and $46\times$ those of the cooperative receiver. The reason is the attenuation dependence of its cooperativity identified in Section~\ref{sec:coop}. At the deepest point of the variation, $n_\mathrm{H}(y^*)$ rises to $4.11$, and the receptor gain falls from $0.319$ to $0.213$, whereas it is invariant under attenuation for a fixed $n_\mathrm{H}$. Hence, the loop is slowest where the channel is most attenuated, and its tracking error of $19\,\%$ at $T_\mathrm{c}=240$ exceeds the $17\,\%$ of the non-cooperative receiver. Therefore, co-regulation removes the noise penalty of the fast loop on the static channel, but forfeits the benefit of cooperativity on the time-varying channel at the tested variation depth.

\begin{figure}[t]
\centering
\includegraphics[width=\columnwidth]{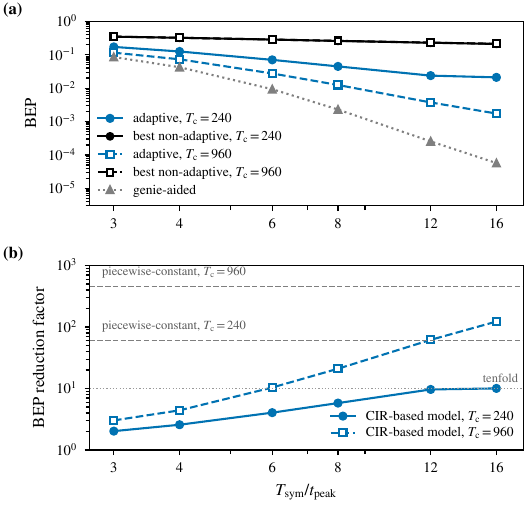}
\caption{Performance under the CIR-based model as a function of the normalized symbol duration $T_\mathrm{sym}/t_\mathrm{peak}$ for $S=5$ at $T_\mathrm{c}=240$ (filled markers) and $960$ (open markers), for the non-cooperative receiver with $K_\mathrm{base}=20$ at $\Omega=200$ (40 trajectories). (a)~BEP of the adaptive receiver, of the best non-adaptive receiver, and of the genie-aided receiver, which does not depend on the coherence time. (b)~BEP reduction factor, with the values of the piecewise-constant signal model at $S=5$ from Fig.~\ref{fig:headline} as dashed lines.}
\label{fig:isi}
\end{figure}
Under the CIR-based model, the benefit of adaptation also depends on the ISI. Fig.~\ref{fig:isi} repeats the comparison of Section~\ref{sec:res-tracking} at $T_\mathrm{c}=240$ and $960$ for normalized symbol durations from $3$ to $16$, i.e., from strong to weak ISI. The peak time is varied at a fixed symbol duration, and only the reference rate of the receiver is recalibrated, through $\mu=s^*\theta$, for each waveform. Since the pulse shape changes together with the ISI, the sweep compares complete operating conditions, instead of isolating the effect of the ISI. Over this range, the loop gain, evaluated with the waveform average $\langle\theta_\mathrm{B}(1-\theta_\mathrm{B})\rangle_t$ in place of $\theta_0\theta_1$ in~(\ref{eq:zeta}), stays within $81$ to $112\,\%$ of its design value, so that the actuator gain of the design can be retained. The relative affinity variance $\alpha^2\Var(\delta y)$ on the static channel stays within $0.0125$ to $0.0133$, compared to $0.0174$ under the piecewise-constant signal model.

As the ISI weakens, the error probability of the receptor with its affinity fixed at the static-channel optimum falls from $8.5\times10^{-2}$ to $5.8\times10^{-5}$, whereas that of the best non-adaptive receiver on the time-varying channel does not fall below $2.1\times10^{-1}$. Over the same range, the static-channel penalty of the closed loop grows from $1.3$ to $18$, as expected for an affinity fluctuation of constant size, whose relative cost grows as the error probability falls. Consequently, the BEP reduction factor grows with the normalized symbol duration. At $T_\mathrm{c}=240$, it grows from $2.0$ to $10$ and saturates above $12t_\mathrm{peak}$, where the adaptive receiver is limited by the lag of the loop instead of the ISI. At $T_\mathrm{c}=960$, where the lag is smaller, it grows from $3.0$ to $62$ at $12t_\mathrm{peak}$ and $123$ at $16t_\mathrm{peak}$, compared to $460$ under the piecewise-constant signal model. The tracking errors, of $16$ to $23\,\%$ at $T_\mathrm{c}=240$ and $4$ to $5\,\%$ at $T_\mathrm{c}=960$, are comparable with the $17$ and $4\,\%$ of the piecewise-constant signal model. Hence, the remaining gap between the reduction factors of the two signal models is mainly due to the ISI, which raises both received levels, thus compressing their ratio, and spreads the peak samples. As a result, even the genie-aided receiver reaches only $5.7\times10^{-5}$ at $16t_\mathrm{peak}$, compared to $6.3\times10^{-7}$ under the piecewise-constant signal model.

\begin{figure}[t]
\centering
\includegraphics[width=\columnwidth]{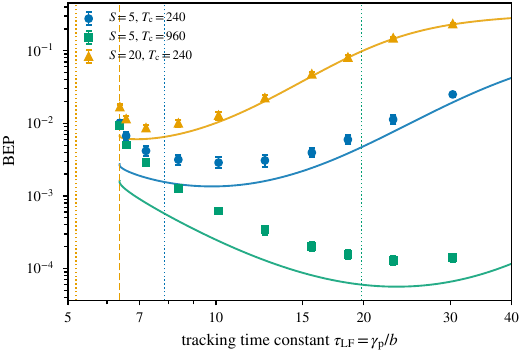}
\caption{Loop-speed design rule for the non-cooperative receiver. BEP as a function of the tracking time constant $\tau_\mathrm{LF}=\gamma_\mathrm{p}/b$ on the periodic channel, from simulation (markers, 48 trajectories, 95\,\% intervals), from the Gaussian approximation~(\ref{eq:surrogate}) (lines), and the leading-order candidate~(\ref{eq:tauopt}) (dotted verticals). For $S=20$ the unconstrained candidate lies below the smallest tracking time of the design family, marked by the dashed vertical.}
\label{fig:designrule}
\end{figure}
\begin{figure*}[t]
\centering
\includegraphics[width=\textwidth]{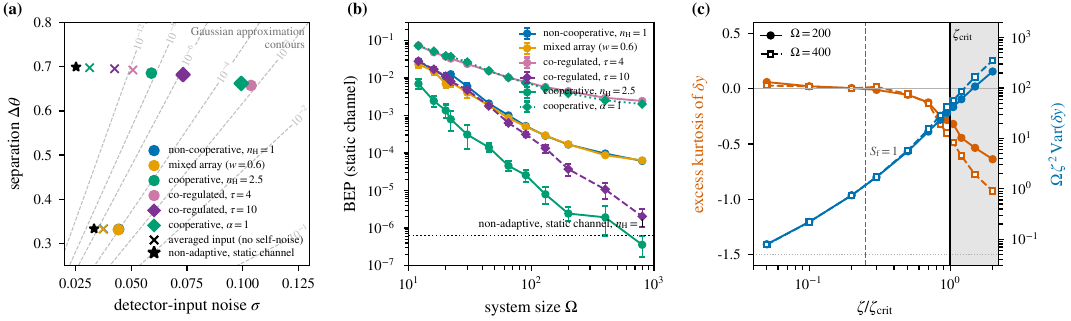}
\caption{(a)~Operating points of the receivers of Table~\ref{tab:design} (filled markers), of the averaged-input references (crosses), and of the non-adaptive receptor at the static-channel optimum (stars) at $\Omega=200$ in the plane of detector-input noise and occupancy separation. The contours are the Gaussian approximation $Q(\Delta\theta/2\sigma)$, which serves only as descriptive coordinates. The co-regulated markers use the bit-1 standard deviation. (b)~BEP as a function of the system size $\Omega$ (exact stochastic simulation for $\Omega\le30$, tau-leaping with 1000 trajectories above, 95\,\% intervals). (c)~Excess kurtosis of the modulator fluctuation and the modulator variance propagated to the occupancy, $\Omega\zeta^2\Var(\delta y)$, as a function of $\zeta/\zeta_\mathrm{crit}$ for the loop with logistic sensing of slope $\zeta$. The dashed line marks the normalized two-stage reference $S_\mathrm{f}=1$ of Section~\ref{sec:stab}.}
\label{fig:results}
\end{figure*}
Fig.~\ref{fig:designrule} tests the loop-speed rule of Section~\ref{sec:bep} by varying the actuator gain $\alpha$ of the non-cooperative receiver, and with it the tracking time $\tau_\mathrm{LF}=\gamma_\mathrm{p}/b$, which decreases with the actuator gain up to $\alpha=\sqrt2$, and thus identifies each design uniquely over the tested range, while $\tau$ saturates at $2/\gamma_\mathrm{p}$ once the reduced loop becomes underdamped. For $S=5$, the measured error probability is lowest at $\tau_\mathrm{LF}=10.1$ for $T_\mathrm{c}=240$ and $22.9$ for $T_\mathrm{c}=960$, i.e., the optimum moves by a factor of $2.27$, compared to $4^{2/3}=2.52$ from~(\ref{eq:tauopt}). The Gaussian approximation~(\ref{eq:surrogate}) places these minima at $9.8$ and $23.1$, and the rule~(\ref{eq:tauopt}) at $7.9$ and $19.8$, respectively. For $S=20$ at $T_\mathrm{c}=240$, the unconstrained candidate $5.2$ lies below the smallest tracking time $6.36$ of the design family, which is then the implementable candidate, compared to the measured optimum of $7.2$. Therefore, the projected rule underestimates the optimal tracking time by $12$ to $22\,\%$, which is the accuracy expected of a leading-order balance that neglects the phase average and the non-Gaussian tails, while it reproduces the dependence on the coherence time and the variation depth, and thus replaces the crossover heuristic used to locate the design points. Moreover, the minimum is shallow, as the error probability at the design point $\tau_\mathrm{LF}=12.6$ lies within a factor of $1.1$ of the measured minimum at $T_\mathrm{c}=240$ and $2.6$ at $T_\mathrm{c}=960$.

The actuation law also matters on a time-varying channel. In a separate ensemble at $T_\mathrm{c}=240$, we calibrated the bounded actuation law~(\ref{eq:mwc}) with $\chi=100$ to the loop gain of the non-cooperative design, with a range $[0.061,6.05]$ that contains the target affinities for variation depths up to $S=33$. On the static channel, the bounded and exponential laws are indistinguishable, with error probabilities of $1.5\times10^{-4}$ and $1.8\times10^{-4}$, respectively. At $S=5$, however, the bounded receiver reduces the BEP by a factor of only $6.7$, compared to $58$ for the exponential law in this ensemble, with a tracking error of $36\,\%$ compared to $17\,\%$, and at $S=20$, it performs worse than the non-adaptive receiver. The reason is not the limited range but the decreasing actuator gain, as reaching $K_\mathrm{D}=0.4$ at $S=5$ requires $y=12.0$, where $\beta(y)$ has fallen by a factor of $6.5$ below its calibration value, and the loop slows down accordingly. A single effector site thus does not maintain the actuator gain over the range of affinities required by the variation, unlike the exponential actuation law~(\ref{eq:exp}).

The exponential law, in turn, relies on the reduction of the modification level to the modulator concentration, which we checked by simulating the modification state $m$ explicitly, with $k_\mathrm{m}=k_\mathrm{d}/M$ and the actuator gain $\alpha$ of~(\ref{eq:exp}) preserved. With $M=100$ sites, the modification range $aM=38.4$ exceeds the shift $1+\ln S=2.61$ required by a fivefold variation by an order of magnitude, and for a removal rate of $k_\mathrm{d}=5\gamma_\mathrm{p}$, the BEP reduction factor of $48$ at $S=5$ and $T_\mathrm{c}=240$ agrees with the $61$ of the reduced model within the bootstrap intervals of the two ensembles. A slower removal, with $k_\mathrm{d}=0.2\gamma_\mathrm{p}$, adds a lag, which raises the scaled modulator variance from $25.5$ to $43.7$ and lowers the reduction factor to $36$. With $M=10$ sites, on the other hand, the modification range $aM=3.8$ is close to the required shift, and the local actuator gain $\alpha(1-m/M)^2$ falls as the modified fraction reaches $63.7\,\%$. As a result, the reduction factor at $S=5$ is only $2.7$, with a tracking error of $48\,\%$. Therefore, the model reduction holds for a modification range that is large compared to the required shift and a modification turnover that is at least as fast as that of the modulator. A small range, on the other hand, reduces the actuator gain where the variation requires it, as the bounded actuation law does.

\subsection{Performance on the Static Channel}\label{sec:res-static}
In this subsection, we evaluate the receivers on the \emph{static channel}, i.e., at $g=1$, which isolates the noise cost of adaptation from its tracking benefit. As the receiver has no CSI, it pays this cost on every channel, and in the tested cases, the error probability of the non-cooperative and co-regulated receivers on the time-varying channel never falls below its static-channel value, whereas the cooperative values at variation depths below two lie within the bootstrap intervals of their static-channel values.

\subsubsection{Comparison of Receptor Designs}\label{sec:res-hc}
Table~\ref{tab:design} and Fig.~\ref{fig:results}(a) compare the receivers at their design points at $\Omega=200$, where the non-adaptive receptor with its affinity fixed at the static-channel optimum marks the error-probability floor of the array. The exact binomial tail exceeds a Gaussian approximation with the measured means and standard deviations of the two occupancies by factors of $1.8$ for the non-cooperative and $930$ for the cooperative receiver, because the occupancy fluctuations are not Gaussian, and the wide occupancy separation of the cooperative receiver places the error deep in their tails. In every ensemble, the average occupancy is held at $0.5002$, and the mean affinity of the non-cooperative receiver is $2.016$, compared to $\sqrt{c_0c_1}=2.0$. The measured modulator variance lies $1$ to $15\,\%$ below the full linear-noise prediction, including the self-noise term. These deviations are not caused by the finite correlation time of the symbol sequence, but by the nonlinearity of the receptor map, since, with linearized sensing driven by the exact symbol switching, the fast co-regulated loop, which deviates the most, differs by only $-1\,\%$ from the linear prediction. Relative to the averaged-input references, the self-noise raises the modulator variance by factors of $2.8$ to $7.8$, depending on the receiver, and the error probability of the non-cooperative receiver by a factor of $16$.

The exact error probabilities in Table~\ref{tab:design} range from $2.5\times10^{-6}$ for the cooperative to $4.3\times10^{-3}$ for the fast co-regulated receiver, compared to $6.3\times10^{-7}$ for the non-adaptive receptor with $n_\mathrm{H}=1$ and its affinity fixed at the static-channel optimum, a difference that reflects the noise cost of adaptation discussed in Section~\ref{sec:bep}. Two further comparisons isolate the individual effects. First, a cooperative receptor with a fixed $n_\mathrm{H}=2.5$, operated at the actuator gain and rates of the fast co-regulated receiver, has an error probability within a factor of $1.1$ of the co-regulated receiver, and thus, the fluctuation of the cooperativity is not the cause of the co-regulation penalty. Second, the matched co-regulated receiver, with the same equilibrium cooperativity but the loop gain of the design, is $140\times$ better than the fast one. Its error probability is $0.19\times$ that of the non-cooperative receiver and $13\times$ that of the cooperative receiver, with a total of $1003$ controller and modulator molecules, close to the $1080$ of the non-cooperative receiver. The cooperative receiver uses $1580$ molecules, but at $\Omega=130$, where its molecule count of $1027$ lies slightly below $1080$, its error probability of $7.9\times10^{-6}$ is still $21\times$ below that of the non-cooperative receiver at $\Omega=200$. Hence, its advantage is not an artifact of the resource comparison.

\subsubsection{Dependence on the Copy Number}\label{sec:res-lc}
As the copy numbers of the controller species in a cell may be small, Fig.~\ref{fig:results}(b) shows the error probability as a function of the system size from $\Omega=12$ to $800$, with the cooperative receiver at $\alpha=1$ as a dotted line and the non-adaptive receptor with $n_\mathrm{H}=1$ as a horizontal line. In this separate ensemble, the cooperative receiver is the best at every copy number, with an error probability lower than that of the non-cooperative receiver by a factor of $7.9$ at $\Omega=20$ and $69$ at $\Omega=200$. The matched co-regulated receiver coincides with the non-cooperative receiver at the lowest copy numbers, with an error-probability ratio of $1.03$ at $\Omega=20$, and separates from it as the reaction noise recedes, with ratios of $0.22$ at $\Omega=200$ and $0.03$ at $\Omega=800$. The mixed-array receiver follows the non-cooperative receiver closely, and the fast co-regulated receiver and its fixed-cooperativity counterpart at $\alpha=1$ coincide at every $\Omega$, which confirms the conclusion of Section~\ref{sec:res-hc}.

The dependence on the copy number is determined by the two noise sources in~(\ref{eq:vardy}). Below $\Omega\approx100$, the reaction noise dominates, and the error probabilities fall steeply, e.g., by a factor of $65$ between $\Omega=20$ and $200$ for the non-cooperative receiver. Above this point, the self-noise term dominates, and the error probability of the non-cooperative receiver decreases by a factor of only $2.7$ between $\Omega=200$ and $800$. Hence, the sweep demonstrates diminishing returns over the tested range, although it does not establish the limit of the BEP as $\Omega\to\infty$. Longer averaging, if the channel permits, reduces the self-noise contribution, whereas cooperativity improves the other contributions and the resulting error probability, but leaves the normalized self-noise floor $\sigma_\mathrm{bits}^2/\Delta\theta^2$ unchanged at matched loop speed.

\section{Discussion on Biological Feasibility and Limitations}\label{sec:bio}
In this section, we discuss the biological feasibility of the receiver, provide an illustrative mapping to physical units, and state the assumptions of the model. One model time unit corresponds to 25~s, and thus, $T_\mathrm{sym}=0.5$ corresponds to 12.5~s, i.e., three peak times of the diffusive pulse at $d=50\,\mu$m and $D=100\,\mu$m$^2$/s~\cite{Kuscu2023adaptive}, and the loop time constant $\tau=10.1$ to 252~s.

\textbf{Chemical realization:} The controller reads the occupancy through the transduction pathway of the receptor, e.g., the production of cyclic adenosine monophosphate by adenylyl cyclase downstream of active G-protein-coupled receptors (GPCRs)~\cite{Klauer2024gpcr}, or the phosphorylation of CheY by the kinase CheA in bacterial chemotaxis~\cite{Sourjik2004functional}, and either second messenger equilibrates within a symbol. In an allosteric receptor, however, activity and occupancy are distinct, and the sensed quantity and the setpoint must then be calibrated together. The antithetic pair must annihilate in a one-to-one stoichiometry, as realized by a sigma factor and its anti-sigma factor in \emph{E.\ coli}~\cite{Aoki2019universal}, sense and antisense RNA in mammalian cells~\cite{Frei2022mammalian}, and split inteins~\cite{Anastassov2023intein}. The modulator $Y$ represents the activity of a modifying enzyme or of an effector, whose binding to the receptor must be fast compared to the loop. Equation~(\ref{eq:mwc}) describes the standard mechanism of allosteric modulation~\cite{Christopoulos2002allosterism,Canals2012mwc}, and intracellular allosteric sites on GPCRs have been characterized structurally~\cite{Liu2017intracellular}, which suggests that the effector can be a peptide or protein produced by the cell. Bacterial chemotaxis provides the closest natural analogy, as the methylation of the Tar receptor shifts the free energy of its active state~\cite{Keymer2006chemosensing,Mello2007effects} under integral feedback by the enzymes CheR and CheB~\cite{Yi2000robust,Alon1999robustness}. The analogy, however, holds only at the level of the control law, as CheR and CheB act catalytically and not by sequestration, and the adapted quantity is the kinase activity instead of the ligand occupancy.

\textbf{Timescales:} For a given symbol duration, the lower end of the admissible loop time constant is set by the self-noise. Averaging $\tau/T_\mathrm{sym}=20$ symbols, as required for the noise level of the design, fixes $\tau\approx252$~s and, via the design ratio $\tau\approx5/\gamma_\mathrm{p}$ of Table~\ref{tab:params}, a modulator turnover of $\gamma_\mathrm{p}\approx0.02$~s$^{-1}$. Transcriptional actuation, at $10^{-4}$ to $10^{-3}$~s$^{-1}$, gives loop time constants of $5\times10^{3}$ to $5\times10^{4}$~s and thus requires a channel that remains coherent for many hours, whereas phosphorylation cycling at $10^{-1}$ to $1$~s$^{-1}$ averages fewer than four symbols and raises the self-noise above the design level. Receptor methylation, at approximately $10^{-2}$~s$^{-1}$, gives $\tau\approx500$~s, close to the required value. The upper end is given by~(\ref{eq:rule}), which requires $T_\mathrm{c}\gtrsim13\,\tau$, i.e., $T_\mathrm{c}\gtrsim3277$~s, at $S=5$. Therefore, the modulator should be actuated by a post-translational modification with a turnover near $10^{-2}$~s$^{-1}$, while a transcriptional circuit may still provide the antithetic pair if its species turn over slowly enough for the integrator not to leak. This leak requirement of Section~\ref{sec:leaky} is demanding, as the tolerance of $8.0\times10^{-3}$ measured for a ten-percent increase in the error probability of the non-cooperative receiver corresponds to a doubling time longer than $1.2$~h, whereas the dilution rates of $10^{-4}$ to $10^{-3}$~s$^{-1}$ in growing cells correspond to $\gamma_\mathrm{c}/\gamma_\mathrm{p}=0.005$ to $0.05$. The design therefore requires slowly dividing or non-dividing cells, such as mammalian cells, or controller species protected from dilution. Under the CIR-based model, the BEP reduction factor at $T_\mathrm{c}=6000$~s is only $2.0$ under the strong ISI at $T_\mathrm{sym}=3t_\mathrm{peak}$, and grows to $62$ at $T_\mathrm{c}=24000$~s and $12t_\mathrm{peak}$, i.e., for $d=25\,\mu$m (Fig.~\ref{fig:isi}). Hence, condition~(\ref{eq:rule}) does not transfer unchanged to the CIR-based model.

\textbf{Assumptions and limitations:} The cytosol of a cell of radius $R=1\,\mu$m mixes within $R^2/D_\mathrm{cyt}\sim0.1$~s for $D_\mathrm{cyt}\approx10\,\mu$m$^2$/s, three orders of magnitude below the loop time constant, which justifies the well-mixed treatment of the network. Binding is assumed to be at quasi-steady state within a symbol, which requires $k_-\gg1/T_\mathrm{sym}$ and holds for typical unbinding rates of $1$ to $10^{2}$~s$^{-1}$, and the receptor units are independent. The channel variation is modeled as a common attenuation of both received levels without pulse-shape changes, the CIR-based model is restricted to the free-diffusion CIR, and the analysis treats the interference through its mean, whereas the simulations use explicit pulse histories truncated to $80$ symbols. The counting noise of the channel is not included, and the sensing and actuation interface is idealized, as the controller senses the occupancy without binding fluctuations or second-messenger noise, and the modification state evolves deterministically. The tolerance of the noisy receiver to an error in the setpoint ratio $\mu/\theta$, which detunes the receptor affinity, is not quantified here. The model rates are nondimensional and mapped to physical ranges, not to a specific cell, and $\Omega=200$ corresponds to $1080$ controller and modulator molecules for the non-cooperative receiver. The reported error probabilities are therefore results of the model, and not measurements of receiver performance.

\section{Conclusion}\label{sec:conclusion}
In this paper, we investigate an MC receiver that autonomously tunes the affinity of its ligand receptors through an intracellular chemical reaction network. We show that a four-reaction AIF controller, which holds the symbol-averaged occupancy at one half, places the dissociation constant, without any CSI, at the geometric mean of the received levels, which, for a piecewise-constant received signal and equal priors, minimizes the error probability of the fixed-threshold detector for any fixed receptor cooperativity. The self-noise induced by the random symbol stream sets an error floor that does not decrease with the copy numbers of the network molecules, and balancing it against the channel-tracking error yields a loop-speed rule that the simulations support to within about a quarter of the optimal tracking time. Neglecting the self-noise, as in the averaged-input idealization, would make the non-cooperative receivers appear an order of magnitude better than they are and the cooperative ones fall below the reporting threshold of $10^{-8}$. For a variation depth of $S=5$ and a coherence time of $24$ loop time constants, adaptation lowers the error probability by factors of $61$ and $410$ for non-cooperative and cooperative receptors, respectively, relative to the best non-adaptive receiver, a benefit that grows with the coherence time, peaks at moderate variation depths, and, under the CIR-based model, rises from a factor of $2.0$ to $123$ as the ISI weakens and the coherence time increases. On a static channel, however, a non-adaptive receiver remains better. Receptor cooperativity widens the occupancy separation by more than its noise cost, whereas co-regulated cooperativity constrains the actuator gain and slows the loop on a time-varying channel. The actuator must retain its gain over the range of affinities required by the variation depth, unlike the tested bounded actuation law and saturating modification, and the illustrative mapping to physical units calls for post-translational actuation and for controller species diluted more slowly than in fast-growing cells. Natural extensions are a sensor-based proportional-integral controller~\cite{Filo2026sensor}, which reduces the modulator noise without additional species, a modulator with several modification sites, and a physically parameterized channel with counting noise and a mobile transmitter. These results indicate that the adaptivity of natural sensory systems can be transferred to engineered MC receivers with a small number of reactions, and provide the design conditions under which such receivers are beneficial.

\section*{Data and Code Availability}
{\sloppy
The simulation and figure-generation code, together with a table of the settings of every reported experiment, is available at \url{https://github.com/calico-lab/autonomous-receptor-tuning}.\par}

\bibliographystyle{IEEEtran}
\bibliography{references}

\end{document}